\newtheorem{definition}{Definition}
\newtheorem{strategy}{Strategy}    % gws
\newtheorem{theorem}{Theorem}
\newenvironment{proof}{\begin{IEEEproof}}{\end{IEEEproof}}
\newcommand\MYhyperrefoptions{bookmarks=true,bookmarksnumbered=true,
	pdfpagemode={UseOutlines},plainpages=false,pdfpagelabels=true,
	colorlinks=true,linkcolor={blue},citecolor={blue},urlcolor={blue},
	pdftitle={Targeted High-Utility Sequence Querying},%<!CHANGE!
	pdfsubject={Typesetting},%<!CHANGE!
	pdfauthor={Chunkai Zhang},%<!CHANGE!
	pdfkeywords={Targeted-oriented querying; target sequence; utility mining; pruning strategies.}}%<^!CHANGE!
\begin{document}
%
% paper title
\title{TUSQ: Targeted High-Utility Sequence Querying}

\author{% Chunkai Zhang, Zilin Du, Quanjian Dai, Wensheng Gan*, Jian Weng, and Philip S. Yu

Chunkai Zhang,~\IEEEmembership{Member,~IEEE,}
 Zilin Du, Quanjian Dai, Wensheng Gan,~\IEEEmembership{Member,~IEEE,}\\
  Jian Weng,~\IEEEmembership{Member,~IEEE,}
    and Philip S. Yu,~\IEEEmembership{Life Fellow,~IEEE,}
	
	\IEEEcompsocitemizethanks{\IEEEcompsocthanksitem Chunkai Zhang, Zilin Du, and Quanjian Dai are with the Department of Computer Science and Technology, Harbin Institute of Technology (Shenzhen), Shenzhen 518055, China.
	
	\IEEEcompsocthanksitem Wensheng Gan is with the College of Cyber Security, Jinan University, Guangzhou 510632, China; and with Guangdong Artificial Intelligence and Digital Economy Laboratory (Guangzhou), China.

	\IEEEcompsocthanksitem Jian Weng is with the College of Cyber Security, Jinan University, Guangzhou 510632, China.

	\IEEEcompsocthanksitem Philip S. Yu is with the Department of Computer Science, University of Illinois at Chicago, IL, USA.}
	
	\thanks{Corresponding author:  Wensheng Gan, E-mail: wsgan001@gmail.com}
	%\thanks{Manuscript received XX, 2021; revised XX.}
}

\IEEEtitleabstractindextext{%
\begin{abstract}
	
Significant efforts have been expended in the research and development of a database management system (DBMS) that has a wide range of applications for managing an enormous collection of multisource, heterogeneous, complex, or growing data. Besides the primary function (i.e., create, delete, and update), a practical and impeccable DBMS can interact with users through information selection, that is, querying with their targets. Previous querying algorithms, such as frequent itemset querying and sequential pattern querying (SPQ) have focused on the measurement of frequency, which does not involve the concept of utility, which is helpful for users to discover more informative patterns. To apply the querying technology for wider applications, we incorporate utility into target-oriented SPQ and formulate the task of targeted utility-oriented sequence querying. To address the proposed problem, we develop a novel algorithm, namely targeted high-utility sequence querying (TUSQ), based on two novel upper bounds \textit{suffix remain utility} and \textit{terminated descendants utility} as well as a vertical Last Instance Table structure. For further efficiency, TUSQ relies on a projection technology utilizing a compact data structure called the targeted chain. An extensive experimental study conducted on several real and synthetic datasets shows that the proposed algorithm outperformed the designed baseline algorithm in terms of runtime, memory consumption, and candidate filtering.

\end{abstract}

\begin{IEEEkeywords}
	Targeted-oriented querying, target sequence, utility mining, pruning strategies.
\end{IEEEkeywords}}

% make the title area
\maketitle

\IEEEdisplaynontitleabstractindextext

\IEEEpeerreviewmaketitle

%%%%%%%%%%%%%%%%%%%%%%%%%%%%%%%%%%%%%%%
\IEEEraisesectionheading{
\section{Introduction} \label{sec:introduction}}

\IEEEPARstart{A}{s} a fundamental research topic in the domain of Knowledge Discovery in Database (KDD), frequent itemset mining \cite{agrawal1994fast} (FIM) aims to extract all frequent itemsets from a transaction database including a set of itemsets with respect to the frequency measure, namely support. Let us consider an example in the market basket analysis task. The itemset $\{\textit{ham}\ \textit{bread}\ \textit{milk}\}$ can be discovered owing to its high frequency, which indicates that a large number of customers buy the three commodities together. To better understand customer behavior, business-oriented rules must be explored based on the frequent itemsets that are obtained with respect to the confidence measure. In the aforementioned case, if a large number of customers who buy ham and bread also buy milk, then the desired rule has the form $\{\textit{ham}\ \textit{bread}\} \Rightarrow \{\textit{milk}\}$. Rules such as this are called association rules \cite{agrawal1994fast}, which can significantly reflect a causal relationship. Till now, two technologies, i.e., FIM and association rule mining (ARM), have been extensively studied \cite{hipp2000algorithms,zaki2001spade}. It is noted that FIM is the first step of ARM; consequently, they are often mentioned together owing to the progressive relationship. The results with occurrence frequency are beneficial for understanding customer behavior and support decisions for copromoting products and offering discounts to increase sales \cite{fournier2017survey}.

As a significant form of big data, sequence data is gradually being considered in several real-world applications, such as sequential expression of genes, web browsing history, and market expenditure records. The elements in a sequence are arranged in chronological order according to their occurrence time. However, FIM algorithms cannot handle sequence data. This challenge motivates a new research issue named sequential pattern mining (SPM) \cite{agrawal1995mining,ayres2002sequential}. By maintaining the order of elements in a sequence, SPM is able to extract all frequent subsequences, which are regarded as sequential patterns, existing in a sequence database. For example, SPM algorithms can be used to identify patterns such as $<$\{\textit{phone}\}, \{\textit{phone shell}\}$>$ in market basket analysis, indicating that these commodities are frequently bought by customers in a chronological order. Obviously, SPM considers the occurrence and sequential relationship information that is ignored by FIM, which intrinsically helps discover more informative patterns.

In several real-life situations, retailers prefer to identify patterns that can yield a high profit rather than those bought frequently. To address this problem, the tasks of high-utility itemset mining (HUIM) \cite{liu2005two,fournier2014fhm} and high-utility SPM (HUSPM) \cite{ahmed2010novel,shie2011mining,gan2021survey} were generalized from FIM and SPM, respectively, with the integration of utility. HUIM is a prominent solution for analyzing quantitative data with a series of itemsets (i.e., transactions) where items are quantified by utility values. Correspondingly, HUSPM can handle databases with a set of quantitative sequences. For brevity, the two technologies are collectively referred to as utility mining. In practice, in the problem of utility mining, each type of item is associated with an integer named external utility, which usually represents the unit profit of the item. In addition, each occurrence of an item is embedded in an integer called internal utility, which usually denotes the sales quantity in a purchase. The main goal of utility mining is to extract patterns with high utility, that is, to discover certain patterns that yield a high profit.

Significant efforts have been expended in the research and development of a database management system (DBMS) \cite{mccarthy1989architecture,stonebraker1991postgres} that has a wide range of applications for managing an enormous collection of multisource, heterogeneous, complex, or growing data. Besides the primary functions (i.e., create, delete, and update), a practical and impeccable DBMS can interact with users by information selection, that is, querying according to their targets. The aforementioned five technologies (i.e., FIM, ARM, SPM, HUIM, and HUSPM) were designed to discover the complete set of patterns that satisfy the minimum utility threshold as defined by the user in a database; however, they do not allow the user to perform targeted queries. To satisfy the requirements of users, a series of frequency-based methods that can search for specific goal items were proposed. Till now, target-oriented frequent itemset querying \cite{shabtay2018guided}, association rule querying \cite{kubat2003itemset,fournier2013meit,abeysinghe2017query}, and sequential pattern querying (SPQ) \cite{chiang2003goal,chueh2010mining,chand2012target} have performed significant roles in querying in the database. The three target-oriented technologies can efficiently excavate patterns and rules involving a subset of certain items, such as targeted queries, and have shown significant potential in several real-life situations \cite{abeysinghe2018query}.

Previous querying algorithms have focused on the measurement frequency, which does not involve the concept of utility, which is helpful for users to discover more informative patterns. To apply the querying technology to wider applications, we incorporated utility into target-oriented SPQ and proposed a novel problem named targeted utility-oriented sequence querying. The task of targeted utility-oriented sequence querying is predominantly to extract subsequences, as targeted queries, that not only contain the predefined target but also have a high utility value with respect to a specified minimum utility threshold. These customized patterns (i.e., targeted queries) with auxiliary knowledge can benefit a number of applications. For example, in a retail store, \textit{milk} is heavily hoarded and must be sold on sale at present. Then, the targeted utility-oriented sequence querying method can be used to discover high-utility patterns pertaining to the item \textit{milk}. Let us assume the pattern $<$\{\textit{milk} \textit{bread}\}, \{\textit{ham}\}$>$ is extracted; then, the decision-makers can adopt a cross-marketing strategy that promotes \textit{milk} and \textit{bread} together. For further profit, they can also discount \textit{ham} over a period of time. Obviously, these extracted target-related high-utility patterns are helpful for retailers for making decisions and performing loss-leader analysis. Moreover, targeted utility-oriented sequence querying can also be applied to the mobile commerce environment, gene expression analysis, time-aware recommendation, etc.  

Several challenges exist in solving the proposed problem of targeted utility-oriented sequence querying, which can be described as follows. First, the method to calculate the utility is moderately different from that used for measuring frequency. When calculating utility values, identifying all occurrences of a candidate in a database is required. This also leads to the famous apriori \cite{agrawal1994fast, agrawal1995mining} property, which is not suitable for the utility mining problem, indicating that the existing pruning strategies utilizing frequency anti-monotonicity in frequency-based approaches cannot be directly applied to our problem. Second, the inherent sequential order of elements in sequences results in a critical combinatorial explosion of the search space. Thus, it is significantly difficult to mine targeted queries without designing tight upper bounds of utilities for reducing candidates that are not a desired result of the query, especially in large-scale databases. Third, in general, the method identifies the desired queries by enumerating candidates in alphabetical order. However, certain prefixes cannot be extended to queries containing the target. Identifying a method to terminate enumerating based on these prefixes as early as possible is a challenging problem. Fourth, in general, a common way for saving the memory consumption is to generate projected databases that are small-scale and useful in reducing the scope of the scan. Thus, designing a compact data structure to save projected databases is also a challenging issue.

With the rapid development of networking, data storage, and data collection capacity, big data are now rapidly expanding in all science and engineering domains, including physical, biological, and biomedical sciences. The volume of data handled by DBMSs are becoming larger and larger. Thus, there is an urgent requirement for algorithms that can process massive databases efficiently and provide acceptable querying performance. The primary contributions of this paper can be summarized as follows: 

\begin{itemize}
	
	\item To the best of our knowledge, this is the first paper that incorporates the concept of utility into target-oriented SPQ. This work formulates the problem of targeted utility-oriented sequence querying aiming to discover the complete set of desired queries (i.e., utility-driven targeted queries) that are more informative and useful for users.
	
	\item We successfully solved the above querying problem using a novel algorithm named targeted high-utility sequence querying (TUSQ) with a preprocessing strategy. A compact data structure, called targeted utility, is applied to store the necessary information in the projected databases.  
	
	\item For further efficiency improvement, we developed two novel upper bounds, i.e., suffix remain utility (\textit{SRU}) and terminated descendants utility (\textit{TDU}), as well as a Last Instance Table (LI-Table) structure for facilitating the calculation of upper bounds. Based on the anti-monotonous properties of upper bounds, two pruning strategies were designed to accelerate the mining process. 
	
	\item Both real-life and synthetic datasets were used in the evaluation. By conducting substantial experiments, the TUSQ algorithm shows its effectiveness and high efficiency for mining all desired queries with a user-defined minimum utility threshold from databases.
	
\end{itemize}

The remainder of this paper is organized as follows. In Section \ref{sec:relatedwork}, certain related works are briefly reviewed. Section \ref{sec:preliminaries} provides cetain essential definitions and formulates the problem to be solved. The proposed method, TUSQ, with several effective upper bounds and compact data structures, is described in Section \ref{sec:method}. A series of experiments were conducted and the results are presented to illustrate the performance of the proposed algorithms in Section \ref{sec:experiments}. Finally, the conclusions are drawn in Section \ref{sec:conclusion}.

\section{Related work}  \label{sec:relatedwork}

% In this section, certain related studies on conventional pattern mining and target-oriented querying are briefly reviewed.

\subsection{Conventional pattern mining}

Pattern mining \cite{pei2004mining,fournier2017survey} has emerged as a well-studied data mining technology, which is applied in several real-world situations, such as website clickstream analysis \cite{li2008fast}, gene regulation \cite{zihayat2017mining}, image classification \cite{fernando2012effective}, and network traffic analysis \cite{brauckhoff2009anomaly}. The problem pertaining to FIM and ARM was first discussed by Agrawal et al. \cite{agrawal1994fast}. To address this issue, they designed the famous Apriori algorithm utilizing the anti-monotonicity of frequency. For further efficiency, a series of algorithms \cite{han2004mining, zaki2000scalable} that can avoid exploring the complete search space of all possible itemsets were developed by adopting pruning strategies. Itemset mining is a significantly active research field, where certain extension problems of FIM, such as incremental mining \cite{leung2007cantree}, stream mining \cite{chang2003finding}, and fuzzy mining \cite{hong2004fuzzy}, were developed and widely studied. Then, as a prominent solution for analyzing sequential data, the task of SPM considering the sequential ordering of itemsets was proposed. Inspired by the Apriori principle, a set of Apriori-based algorithms such as AprioriAll \cite{agrawal1995mining} and GSP \cite{srikant1996mining} were designed for handling the issue. Till date, there have been several existing studies \cite{pei2004mining,zaki2001spade} in the literature about determining sequential patterns from sequence databases when the minimum support threshold is provided. More details about FIM, ARM, and SPM can be obtained in \cite{fournier2017surveyi,fournier2017surveys}.

Utility is a measure of the satisfaction or interests that a consumer obtains from buying a good or service, which can help retailers maximize their benefits. Recently, utility mining (e.g., HUIM, HUSPM) has emerged as a popular issue owing to the thorough consideration of quantities, profits, and chronological order of items. Chan et al. \cite{chan2003mining} first incorporated the concept of utility into the FIM and proposed the task of HUIM. To achieve better performance for mining high-utility itemsets, several efficient algorithms such as UP-Growth \cite{qu2019efficient} and HUI-Miner \cite{tseng2012efficient}, which adopt novel pruning strategies and compact data structures, were designed. In addition to transaction data, utility mining also includes sequential data mining. A large number of previous works on HUSPM \cite{ahmed2010novel, wang2016efficiently, gan2020proum} have focused on designing efficient data structures and algorithms that achieve better performance. Among them, the well-known HUS-Span \cite{wang2016efficiently} explores the search space of patterns in the alphabetical order. It also adopts two tight upper bounds, i.e., prefix extension utility and reduced sequence utility, and two corresponding pruning strategies. Moreover, it utilizes the utility chain to store necessary information for calculation. Because of its high efficiency, we designed the basic algorithm as the baseline based on HUS-Span. Currently, the field of HUSPM is still actively researched, and several efficient methods \cite{gan2020fast}, extension problems \cite{gan2021utility,gan2019utility}, new real-life applications \cite{wang2018incremental}, and privacy preserving utility mining \cite{gan2018privacy} have been determined. A comprehensive survey of the present development of utility mining was reported by Gan et al. \cite{gan2021survey}.

\subsection{Target-oriented querying}

All of the aforementioned methods focus on extracting the complete set of patterns that satisfy a predefined threshold. To filter out unwanted information, target-oriented querying algorithms propose a different solution to the problem of pattern mining. The idea is that instead of mining a large number of patterns that may not all be useful, users could input a single target at a time and discover the patterns containing the target as targeted queries. In previous studies, several target-oriented querying approaches were proposed according to the measurement frequency. These interactive methods can return queries with the user-defined targets.

Kubat et al. \cite{kubat2003itemset} are one of the earliest researchers interested in applications where the users wish to present specialized queries in a transaction database. With the designed Itemset-Tree, it can extract all the rules (i.e., targeted queries) that have a user-specified itemset as antecedent according to a minimum confidence and support measurement. This task belongs to the domain of target-oriented querying. Fournier et al. \cite{fournier2013meit} improved the Itemset-Tree, which was optimized for quickly answering queries about itemsets during the operation process in different tasks. Their designed structures can be updated incrementally with new transactions.  Shabty et al. \cite{shabtay2018guided} designed a novel method called Guided FP-growth (GFP-growth) for multitude-targeted mining. The fast and generic tool GFP-growth can determine the frequency of a given large list of itemsets, which serve as the targets, in a large dataset from an FP-tree \cite{han2000mining} based on Target Itemset Tree. Recently,  a query-constraint-based ARM model  \cite{abeysinghe2017query,abeysinghe2018query} was developed for exploratory analysis of diverse clinical datasets integrated in the National Sleep Research Resource. It is important to consider the sequential ordering of itemsets in real-life applications. To handle the sequence data, Chueh et al. \cite{chueh2010mining} defined the target-oriented sequential pattern as a sequential pattern with a concerned itemset (i.e., target) at the end of the pattern. They presented an algorithm reversing the original sequences to discover target-oriented sequential patterns with time intervals. Utilizing the definition of the target-oriented sequential pattern, Chand et al. \cite{chand2012target} proposed a novel SPM approach, which not only determined the existence of a pattern but also checked whether the pattern was target oriented and also satisfied the recency and monetary constraints. Moreover, a new SPM algorithm called goal-oriented algorithm \cite{chiang2003goal} was proposed to discover the transaction activity before losing the customer. It can handle the problem of judging whether a customer is leaving by mining sequential patterns toward a specific goal.

The aforementioned methods can output targeted queries according to the frequency. However, no research has incorporated concept utility into target-oriented querying. The existing HUSPM algorithms obtain exhaustive lists of HUSPs, some of which may be useless. This motivated us to develop an efficient target-oriented querying method for mining HUSPs containing a target sequence from quantitative sequence databases.

\section{Preliminaries}   \label{sec:preliminaries}

In this section, we first present the essential concepts and definitions in our proposal. Based on these preliminaries, we formulated a novel problem of targeted utility-oriented sequence querying that must be addressed.

\subsection{Definitions}

\begin{definition}[itemset and sequence]
	Let the finite set $I$ = \{$i_{1}$, $i_{2}$, $\cdots$, $i_{N}$\} be a set of items that may appear in the database. We say that $x$ is an itemset if $x$ is a nonempty subset of $I$, i.e., $x \subseteq I$. Assuming that $x$ contains $m$ items, the length of $x$ can be denoted as $|x|$ = $m$. $s$ = $<$$x_{1}$, $x_{2}$, $\cdots$, $x_{n}$$>$ is called a sequence consisting of a series of itemsets that satisfies the condition $x_{k}\subseteq I$ for $1 \leq k \leq n$, which are arranged in order. We define the length of $s$, denoted as $|s|$, as the sum of the length of each element (i.e., itemset); thus, $|s|$ = $\sum_{k = 1}^{n}|x_{k}|$. $S$ with a length of $l$ is called an $l$-sequence. 
\end{definition}

For example, given a set $I$ = \{$a,b,c,d,e,f$\}, \{$a$\}, and \{$af$\} are itemsets. Further, $s$ = $<$$\{a\}$, $\{af\}$, and $\{de\}$$>$ is a sequence with three itemsets, whose lengths are 1, 2, and 2, respectively. Thus, the length of $s$ is $|s|$ = 1 + 2 + 2 = 5.

\begin{definition}[subsequence]	
	Given two sequences $s$ = $<$$x_{1}$, $x_{2}$, $\cdots$, $x_{m}$$>$ and $s'$ = $<$$x'_{1}$, $x'_{2}$, $\cdots$, $x'_{n}$$>$, if there exists $m$ integers $1 \leq k_{1}$ $< k_{2}$ $< \cdots < k_{m} \leq n$, such that $x_{1}$ $\subseteq$ $x'_{k_{1}}$, $x_{2}$  $\subseteq$ $x'_{k_{2}}$, $\cdots$, $x_{m}\subseteq x'_{k_{m}}$, then $s$ is said to be a subsequence of $s'$, which is denoted as $s \subseteq s'$.
\end{definition}

For example, consider three sequences $s_1$ = $<$$\{a\}$, $\{af\}$, $\{de\}$$>$, $s_2$ = $<$$\{cde\}$, $\{df\}$$>$, and $s'$ = $<$$\{bc\}$, $\{a\}$, $\{acf\}$, $\{def\}$$>$. We say $s_1$ is a subsequence of $s'$, while $s_2$ is not a subsequence of $s'$ as the first itemset of $s_2$ (i.e., $\{cde\}$) is not a subset of any itemset of $s'$.

\begin{definition}[$q$-item, $q$-itemset and $q$-sequence]
	A quantitative sequence ($q$-sequence) database $D$ is formally defined as follows. A quantitative item ($q$-item) is an item $i \in I$ associated with a quantity $q \in \mathcal{R}^+$, which can be represented as a pair ($i$:$q$). Moreover, each item $i \in I$ has its own unit profit. Generally, we refer to the quantity and unit profit as the internal and external utilities, respectively. A quantitative itemset ($q$-itemset) $X$ = \{($i_{1}$:$q_{1}$) ($i_{2}$:$q_{2}$) $\cdots$ ($i_{m}$:$q_{m})$\} is a set of $q$-items. Without loss of generality, items/$q$-items in an itemset/$q$-itemset are arranged in the alphabetical order in the remainder of this paper. A quantitative sequence ($q$-sequence) $S$ = $<$$ X_{1}$, $X_{2}$, $\cdots$, $X_{n}$$>$ is a list of $q$-itemsets, where each $q$-itemset is arranged in ascending order according to the occurrence time. A series of $q$-sequences with their identifiers are composed of a $q$-sequence database $D$ = \{$S_{1}$, $S_{2}$, $\cdots$, $S_{M}$\}.
\end{definition}

To illustrate the important definitions, a quantitative sequence database with five $q$-sequences and six distinct items, as listed in Table \ref{table1}, which is used as the running example. The external utility of each item is listed in Table \ref{table2}. For instance, ($a$:$1$) is a $q$-item, which represents that item $a$ has a quantitative value of $1$. \{($a$:$1$) ($b$:$3$)\} is a $q$-itemset with two $q$-items ($a$:$1$) and ($b$:$3$), and $S_{1}$ = $<$\{($a$:$1$) ($b$:$3$)\}, \{($c$:$1$) ($e$:$2$)\}, \{($c$:$4$) ($d$:$1$)\}$>$ in Table \ref{table1} is a $q$-sequence containing three $q$-itemsets \{($a$:$1$) ($b$:$3$)\}, \{($c$:$1$) ($e$:$2$)\}, and \{($c$:$4$) ($d$:$1$)\}.

\begin{table}[!t]
	\centering
	\caption{Running example of a $q$-sequence database}
	\label{table1}
	\begin{tabular}{|c|c|}  
		\hline 
		\textbf{SID} & \textbf{$q$-sequence} \\
		\hline  
		\(\textit{S}_{1}\) & $<$\{(\textit{a}:1) (\textit{b}:3)\}, \{(\textit{c}:1) (\textit{e}:2)\}, \{(\textit{c}:4) (\textit{d}:1)\}$>$ \\ 
		\hline
		\(\textit{S}_{2}\) & 
		$<$\{(\textit{a}:3)\}, \{(\textit{c}:1) (\textit{d}:1) (\textit{e}:2)\}, 
		\{(\textit{a}:1)\}, 
		\{(\textit{c}:3) (\textit{e}:1)\}$>$ \\
		\hline 
		\(\textit{S}_{3}\) & $<$\{(\textit{b}:2) (\textit{c}:1)\}, \{(\textit{f}:2)\}, 
		\{(\textit{a}:3) (\textit{d}:2)\}$>$ \\  
		\hline  
		\(\textit{S}_{4}\) & 
		$<$\{(\textit{a}:1)\}, 
		\{(\textit{c}:3) (\textit{e}:2)\},
		\{(\textit{c}:2) (\textit{e}:1)\},
		\{(\textit{c}:2)\}$>$ \\
		\hline
		\(\textit{S}_{5}\) & 
		$<$\{(\textit{a}:2)\},
		\{(\textit{c}:1) (\textit{e}:1)\}, \{(\textit{c}:2) (\textit{e}:1)\},
		\{(\textit{f}:2)\}$>$ \\
		\hline
	\end{tabular}
\end{table}

\begin{table}[!t]
	\caption{External utility table}
	\label{table2}
	\centering
	\begin{tabular}{|c|c|c|c|c|c|c|}
		\hline
		\textbf{Item}	    & \textit{a}	& \textit{b}	& \textit{c}	& \textit{d}	& \textit{e}	& \textit{f} \\ \hline 
		\textbf{External utility}	& \$2 & \$1& \$1 & \$3 & \$1 & \$2 \\ \hline
	\end{tabular}
\end{table}

\begin{definition}[match]
	Let us consider an itemset $x$ = \{$ i'_{1}$, $i'_{2}$, $\cdots$, $i'_{m}$\} and a $q$-itemset $X$ = \{($i_{1}$:$q_{1}$) ($i_{2}$:$q_{2}$)$\cdots$($i_{m}$:$q_{m})$\}. We say $x$ matches $X$, denoted as $x \sim X$, if and only if $i'_{k}$ = $i_{k}$ for $1 \leq k \leq m$. On this basis, given a sequence $s$ = $<$$x_{1}$, $x_{2}$, $\cdots$, $x_{n}$$>$, and a $q$-sequence $S$ = $<$$ X_{1}$, $X_{2}$, $\cdots$, $X_{n}$$>$, we say $s$ matches $S$, denoted as $s \sim S$, if and only if $x_{k}\sim X_{k}$ for $1 \leq k \leq n$. 
\end{definition}

For example, \{$ab$\} matches the first itemset of $S_{1}$ and sequence $<$$\{ab\},\{ce\},\{cd\}$$>$ matches $S_{1}$.

\begin{definition}[instance]
	Given a sequence $s$ = $<$$x_{1}$, $x_{2}$, $\cdots$, $x_{m}$$>$ and a $q$-sequence $S$ = $<$$X_{1}$, $X_{2}$, $\cdots$, $X_{n}$$>$, where $m \leq n$, $s$ has an instance in $S$ at position $p$: $<$$k_{1}$, $k_{2}$, $\cdots$, $k_{m}$$>$ if and only if there exists $m$ integers $1 \leq k_{1}$ $< k_{2} $ $< \cdots$ $< k_{m} \leq n$ such that $x'_{v}\sim X_{k_{v}} \land x_{v} \subseteq x'_{v} $ for $1 \leq v \leq m$. Then, we denote $P(s,S)$, where each element is an integer sequence, as the set of instance positions of $s$ in $S$. 
\end{definition}

For example, $<$$\{a\},\{c\}$$>$ has three instances in $S_{2}$ at positions $p_{1}$: $<$1, 2$>$, $p_{2}$: $<$1, 4$>$ and $p_{3}$: $<$3, 4$>$. Thus, $P($$<$$\{a\}$, $\{c\}$$>$, $S_{2}$$)$ = \{$<$1, 2$>$, $<$1, 4$>$, $<$3, 4$>$\}.

\begin{definition}[contain]
	Let us consider a sequence $s$. Here, $s$ is said to be contained in $S$, if and only if $s$ has at least one instance in $S$, which can be denoted as $s' \sim S$ $\land s$ $\subseteq s'$. In the subsequent portions of this paper, $s \sqsubseteq S$ is used to indicate that $s' \sim S$ $\land s$ $\subseteq s'$ for convenience.
\end{definition}

For example, sequence $s$ = $<$$\{a\}$, $\{cd\}$$>$ has an instance in $S_{2}$; therefore, we say that $S_{2}$ contains $s$ and denote it as $s \sqsubseteq S_{2}$. Subsequently, we define the related calculation methods of utility values in different situations. As it can be observed in the $q$-sequence database listed in Table \ref{table1}, each occurrence of an item, $i$, is embedded in an internal utility, denoted as $q(i, j, S)$, which is the quantitative measure for ($i$:$q$) within the $j$-th $q$-itemset of a $q$-sequence $S$. In addition, each type of item is associated with an external utility, denoted as $p(i)$, which is listed in Table \ref{table2}.

\begin{definition}[utility of $q$-item, $q$-itemset and $q$-sequence]
	Given a $q$-sequence $S$, let $u(i, j, S)$ denote the utility of a $q$-item $i$ within the $j$-th $q$-itemset in $S$, which is specified as $u(i, j, S)$ = $q(i, j, S)$ $\times$ $p(i)$. Moreover, the utility of a $q$-itemset/$q$-sequence is defined as the sum of utility values of elements (i.e., $q$-items/$q$-itemsets) contained. 
\end{definition}

For example, the utility of the $q$-item ($a$:1) within the first $q$-itemset of $S_{1}$ can be calculated as $u(a, 1, S_{1})$ = $q(a, 1, S_{1})$ $\times$ $p(a)$ = 1 $\times$ \$2 = \$2. Moreover, the utility of the first $q$-itemset of $S_{1}$ is $u(1,S_{1})$ = \$2 + \$3 = \$5; the utility of $S_{1}$ is $u(S_{1})$ = \$5 + \$3 + \$7 = \$15.

\begin{definition}[utility of instance]
	Given an itemset $x$, the sequence $s$ = $<$$x_{1}$, $x_{2}$, $\cdots$, $x_{m}$$>$ and $q$-sequence $S$ in a $q$-sequence database $D$, we formalize the calculation of itemset utility in the $j$-th $q$-itemset in $s$ as $u(x, j, S)$ = $\sum_{\forall i \in x}^{}{u(i, j, S)}$. It is assumed that $s$ has an instance in $S$ at position $p$: $<$$k_{1}$, $k_{2}$, $\cdots$, $k_{m}$$>$. The utility of this instance can be specified as $u(s, p, S)$ = $\sum_{j=1}^{m}u(x_j, k_{j}, S)$, where $x_j$ is the $j$-th itemset of $s$. 
\end{definition}

For example, $u$($<$\{$a$\}, \{$c$\}$>$, $<$1, 2$>$, $S_{1}$) = \$2 + \$1 = \$3; $u$($<$\{$a$\}, \{$c$\}$>$, $<$1, 3$>$, $S_{1}$) = \$2 + \$4 = \$6. Obviously, a sequence has multiple utility values in a $q$-sequence, which is noticeably different from frequent-based SPM. We denote the utility of $s$ in $S$ as $u(s,S)$. It can be calculated as $u(s,S)$ = $\max\{u(s, p, S)| \forall p \in P(s,S)\}$. For example, $u$($<$\{$a$\}, \{$c$\}$>$, $S_{1}$) = max\{\$3, \$6\} = \$6. The utility of $s$ in $D$ is the overall utility value in all $q$-sequences, which is specified as $u(s)$ = $\sum_{\forall S \in D}^{}u(s,S)$. Let us consider the $q$-sequence database listed in Table \ref{table1}; we obtain $u$($<$\{$a$\}, \{$c$\}$>$) = \$6 + \$9 + \$5 + \$6 = \$26.

\subsection{Problem statement}
\label{Problem statement}

Before we formulate the problem of targeted utility-oriented sequence querying, the $q$-sequence database $D$ can be divided into two sets. All $q$-sequences containing the target sequence $T$ are arranged in the first set, denoted as $D_T$ = \{$S\mid S \in D \land T \sqsubseteq S$\}, or are placed in the other set. 

\begin{definition}[utility-driven targeted query]
	\label{utility-driven targeted query}
	In a $q$-sequence database $D$, given a target sequence $T$ and a minimum utility threshold $\xi$, a sequence $s$ is called a utility-driven targeted query, if and only if $T$ is a subsequence of $s$ (i.e., $T \subseteq s$), and $u(s) \ge \xi $ $\times$ $u(D_T)$.
\end{definition}

\textbf{Problem Statement:} Given a $q$-sequence database with a utility table, the problem of targeted utility-oriented sequence querying is to discover the complete set of utility-driven targeted queries (UTQs) when users present a minimum utility threshold and a target sequence.

To elaborately illustrate the problem of targeted utility-oriented sequence querying, an example is given as follows. In the following sections of this paper, we set $T$ = $<$\{$a$\}, \{$c e$\}, \{$c$\}$>$ as the target sequence. Among the five $q$-sequences listed in Table \ref{table1}, $S_{3}$ does not contain $T$, whereas the other four contain it. Thus, $S_{3}$ can be filtered out, and we obtain $D_T$ = \{$S_{1}$, $S_{2}$, $S_{4}$, $S_{5}$\}. It can be calculated that $u(D_{T})$ = \$58. When the minimum utility threshold $\xi$ is set to 30\%, the discovered UTQs are $<$\{$a$\}, \{$c e$\}, \{$c$\}$>$ and $<$\{$a$\}, \{$c e$\}, \{$c e$\}$>$, with a utility of \$38 and \$19, respectively.

\section{Proposed method}   \label{sec:method}

To the best of our knowledge, no study has focused on the problem of targeted utility-oriented sequence querying. This motivated us to develop the TUSQ method, which can extract the complete set of targeted queries from a $q$-sequence database. This section presents the details of the developed method, which adopts two novel upper bounds and two corresponding pruning strategies to overcome the problem of combinatorial explosion of a search space. To improve the efficiency, we propose a database preprocessing (DPP) strategy and a projection mechanism to reduce the calculation amount.

\subsection{Database preprocessing strategy}

In the task of targeted utility-oriented sequence querying, given a target sequence $T$ and $q$-sequence database $D$, we call the $q$-sequence that does not contain $T$ in $D$ as a \textit{redundant $q$-sequence}. Intuitively, the redundant sequences do not contribute to the utility of the targeted queries that contain $T$; consequently, they can be filtered out before the mining process. Thus, we introduce the following database DPP strategy based on Section \ref{Problem statement}.

\begin{strategy}
	(\textit{DPP strategy}) Given a target sequence $T$ and $q$-sequence database $D$, we can remove the $q$-sequence $S \notin D_T$ from $D$ before the mining process. Then, the TUSQ algorithm is performed on the filtered $q$-sequence database, i.e., $D_{T}$.
\end{strategy} 

\subsection{Pruning strategies}

To facilitate the following discussions, we first present the following definitions.

\begin{definition}[extension]
	\rm Let us consider an $l$-sequence $s$; the extension operations (i.e., $I$-Extension and $S$-Extension) add an item to the end of $s$ and generate an extension sequence $s'$ that is an ($l$+1)-sequence. Appending item $i$ to the last itemset of $s$ is referred to as $I$-Extension, where $s'$ can be denoted as $<$$ s \bigoplus i $$>$. Further, we define the $S$-Extension of sequence $s$ as the addition of $i$ to a new itemset and appending the itemset at the end of $s$, where $s'$ is represented as $<$$s \bigotimes i$$>$.
\end{definition}

\begin{definition}[extension position]\label{extension_position}
	\rm Given a sequence $s$= $<$$x_{1}$, $x_{2}$, $\cdots$, $x_{m}$$>$ and $q$-sequence $S$, it is assumed that $s$ has an instance at position $<$$k_{1}$, $k_{2}$, $\cdots$, $k_{m}$$>$ in $S$. We define the last item of $x_{m}$ as the extension item of $s$ in $S$, and $k_{m}$ is called the extension position of $s$ in $S$.
\end{definition}

For example, $<$\{$a b$\}$>$ can be obtained by performing the $I$-Extension of $<$\{$a$\}$>$, while $<$\{$a$\}, \{$c$\}$>$ is an $S$-Extension sequence of $<$\{$a$\}$>$. For example, $<$\{$a$\}, \{$c$\}$>$ has an instance at $p$: $<$1, 2$>$ in $S_{1}$; therefore, $c$ is an extension of $<$\{$a$\}, \{$c$\}$>$ in $S_{1}$, and the extension position is 2.

\begin{definition}[rest sequence]
	Based on Definition \ref{extension_position}, the rest $q$-sequence of $s$ with respect to the extension position $k_{m}$ in $S$, denoted as $S/_{(s,k_{m})}$, which is defined as a part of $S$, from the item after the extension item of such an instance to the end of $S$. The rest utility is the sum of $q$-items in $S/_{(s,k_{m})}$, which can be specified as $\textit{ru}(s,\ k_{m},\ S)$ = $\sum_{\forall i \in S/_{(s,k_{m})}}^{}u(i)$.
\end{definition}

For example, the rest sequence of the instance of $<$\{$a$\}, \{$c$\}$>$ at $p$: $<$1, 2$>$ in $S_{1}$ is $S/_{(<\{a\}, \{c\}>, 2)}$ = $<$\{($e$:2)\}, \{($c$:4) ($d$:1)\}$>$, and the corresponding rest utility is \textit{ru}($<$\{$a$\}, \{$c$\}$>$, 2, $S_{1}$) = \$2 + \$7 = \$9. Let us assume that the sequence $s$ has a series of instances in a $q$-sequence $S$; then, the set of extension positions of such instances can be represented as \{$ p_{1}$, $p_{2}$, $\ldots$, $p_{n}$\}, where \(p_{1}\) is called the pivot of $s$ in $S$. For example, $<$\{$a$\}, \{$c$\}$>$ has an instance at $p_{1}$: $<$1,2$>$ and $p_{2}$: $<$1, 3$>$ in $S_{1}$, whose extension positions are 2 and 3, respectively; therefore, the pivot of $<$\{$a$\}, \{$c$\}$>$ in $S_{1}$ is 2.

\begin{definition}[prefix and suffix]
	Given two sequences $s$ = $<$$x_{1}$, $x_{2}$, $\ldots$, $x_{m}$$>$, and $r$ = $<$$y_{1}$, $y_{2}$, $\ldots$, $y_{n}$$>$ ($m<n$), $s$ is a prefix of $r$ if it satisfies the following conditions: (1) $x_{i}$ = $y_{i}$ for $1 \leq$ $i \leq m-1$; (2) $x_{m}$ $\subseteq$ $y_{m}$; (3) all the items in $(y_{m}-x_{m})$ are alphabetically bigger than those in $x_{m}$. Moreover, the remaining parts after the prefix $s$ in $r$, i.e., the subsequence $<$$(y_{m}-x_{m})$, $y_{m+1}$, $\ldots$, $y_{n}$$>$, are called the suffix of $r$ with respect to prefix $s$. 
\end{definition} 

For example, $<$\{$a$\}, \{$c$\}$>$ is a prefix of $T$ = $<$\{$a$\}, \{$c e$\}, \{$c$\}$>$; the suffix of $T$ with respect to $<$\{$a$\}, \{$c$\}$>$ is $<$\{$e$\}, \{$c$\}$>$.

\begin{definition}[longest prefix contained by a sequence]
	Given a sequence $s$ = $<$$x_{1}$, $x_{2}$, $\ldots$, $x_{m}$$>$, let us consider a target sequence $T$ = $<$$y_{1}$, $y_{2}$, $\ldots$, $y_{n}$$>$ that has a prefix $\alpha$ = $<$$y'_{1}$, $y'_{2}$, $\ldots$, $y'_{k-1}$, $y'_{k}$$>$, where $1\leq k \leq n$. $\alpha$ is said to be the longest prefix of $T$ contained by $s$, denoted as $\textit{Pre}(T,s)$, if it satisfies the following conditions: (1) there exists $k$ integers $1\leq j_{1} $ $\leq j_{2}$ $\leq \ldots $ $\leq j_{k-1}$ $\leq j_{k} \leq m$, such that $y'_{i} \subseteq x_{j_{i}}$ for $1 \leq i \leq k$, i.e., $\alpha$ is a subsequence of $s$; (2) $(y_{k}-y'_{k})$ is either an empty set, or all items in $(y_{k}-y'_{k})$ are alphabetically bigger than those in $(x_{j_{k}}-y'_{k})$ and $j_{k}$ = $m$; (3) there does not exist a longer prefix $\beta$ of $T$ such that $\beta$ satisfies conditions (1) and (2) simultaneously. Additionally, the remaining part after $\textit{Pre}(T,s)$ in $T$ is defined as the suffix of $T$ with respect to $\textit{Pre}(T,s)$, which can be denoted as $\textit{Suf}(T,s)$.
\end{definition} 

For example, let us consider three sequences $s_{1}$ = $<$\{$a$\}, \{$c$\}$>$, $s_{2}$ = $<$\{$a$\}, \{$c f$\}$>$, $s_{3}$ = $<$\{$a$\}, \{$c$\}, \{$d$\}$>$. Consider the target sequence $T$ = $<$\{$a$\}, \{$c e$\}, \{$c$\}$>$; then, we have \textit{Pre}($T,s_{1})$ = $<$\{$a$\}, \{$c$\}$>$, \textit{Suf}($T,s_{1})$ = $<$\{$e$\}, \{$c$\}$>$; \textit{Pre}($T,s_{2})$ = $<$\{$a$\}$>$, \textit{Suf}($T,s_{2})$ = $<$\{$ce$\}, \{$c$\}$>$; and \textit{Pre}($T,s_{3})$ = $<$\{$a$\}$>$, $\textit{Suf}(T,s_{3})$ = $<$\{$ce$\}, \{$c$\}$>$.

\begin{definition}[promising extension position]
	\rm Given a sequence $s$ = $<$$x_{1}$, $x_{2}$, $\cdots$, $x_{m}$$>$, target sequence $T$, and $q$-sequence $S$, it is assumed that the set of extension positions of all instances of $s$ in $S$ is \textit{EP}($s,S)$ = \{\textit{ep}$_1$, \textit{ep}$_2$, $\cdots$, \textit{ep}$_j$\}. For $\forall ep \in$ $\textit{EP}(s,S)$, if the rest sequence of $S$ with respect to $ep$ contains $\textit{Suf}(T,s)$ (i.e., \textit{Suf}($T,s) \sqsubseteq S/_{(s,ep)}$), then we say $ep$ is a promising extension position. The complete set of promising extension positions of all instances of $s$ in $S$ is denoted as $\textit{PEP}(s,S)$.
\end{definition}

For example, given a sequence $s$ = $<$\{$a$\}, \{$c$\}$>$, we have a set of extension positions \textit{EP}($s,S_{1})$ = $\{2, 3\}$; then, their corresponding rest sequences $S_{1}/_{(s,2)}$ = $<$\{($e$:2)\}, \{($c$:4) ($d$:1)\}$>$ and $S_{1}/_{(s,3)}$ = $<$\{($d$:1)\}$>$ can be obtained. Let us consider the target sequence $T$ = $<$\{$a$\}, \{$c e$\}, \{$c$\}$>$. If we have \textit{Suf}($T,s)$ = $<$\{$e$\}, \{$c$\}$>$, we say that 2 is a promising extension position and \textit{PEP}($s,S_{1})$ = \{2\}, as $S_{1}/_{(s,2)}$ contains $\textit{Suf}(T,s)$, while $S_{1}/_{(s,3)}$ does not.

In the task of conventional HUSPM, the sequential order of itemsets in sequences is considered, which causes the problem of the combinatorial explosion of the search space. In the complete search space, the number of candidates to be examined can be up to $2^{m \times n}$, where $m$ is the number of distinct items and $n$ is the number of itemsets of the longest sequence in the $q$-sequence database \cite{gan2020proum}. The search space is always represented as a data structure named lexicographic quantitative sequence tree (\textit{LQS}-tree), where each node denotes a candidate sequence except the root. More details about the \textit{LQS}-tree can be obtained in Ref. \cite{gan2020proum,zhang2020tkus}. Furthermore, the calculation method of utility is different from that of frequency. This leads to the problem that the apriori property is not satisfied for the utility of sequential patterns, which indicates that we should calculate the utility of all the $2^{m \times n}$ candidates to determine whether they are HUSPs. To address this issue, several upper bounds \cite{yin2012uspan,wang2016efficiently} that overestimate the utility value are proposed to prune the search space. Although these upper bounds and corresponding pruning strategies can significantly reduce the search space, they are not sufficiently efficient in targeted utility-oriented sequence querying. To increase the speed of the mining process, we incorporate the prefix and suffix information of the sequence into the design of two novel upper bounds \textit{SRU} and \textit{TDU}, and introduce the corresponding pruning strategies.

\begin{definition}[suffix remain utility]
	Let us consider a sequence $s$, target sequence $T$, and $q$-sequence $S$. It is assumed that $s$ has an instance in $S$ with an extension position $p$. Note that $\textit{Pre}(T,s)$ is the longest prefix of $T$ contained by $s$, and $\textit{Suf}(T,s)$ is the suffix of $T$ with respect to $\textit{Pre}(T,s)$. The \textit{SRU} of $s$ in $S$ at position $p$ with respect to $T$, denoted as \textit{SRU}( $s,T,p,S)$, is defined as  \textit{SRU}( $s, T, p, S) $ = $ u(s, p, S) $ + \textit{ru}( $s, p, S) $ if and only if \textit{ru}$(s, p, S) > 0$ $\land$ \textit{Suf}( $T,s) \sqsubseteq S/_{(s,p)}$ can be established; otherwise, \textit{SRU}$(s, T, p, S)$ = 0.
\end{definition} 

In the above definition, the constraint condition \textit{Suf}$(T,s)$ $\sqsubseteq$ $S/_{(s,p)}$ indicates that $\textit{SRU}(s,T,p,S) > 0$ only if $p$ is a promising extension position. Note that $EP$ is the set of all extension positions of $s$ in $S$. The \textit{SRU} of $s$ in $S$ with respect to $T$, denoted as $\textit{SRU}(s,T,S)$, is defined as
\[\textit{SRU}(s,T,S) = \mathop{max}\limits_{\forall p \in \textit{EP}}{\textit{SRU}(s,T,p,S)}.\]

The \textit{SRU} of $s$ with respect to $T$ in a $q$-sequence database $D$, denoted as $\textit{SRU}(s,T,D)$, is defined as
\[\textit{SRU}(s,T) = \sum_{\forall S \in D \land t \sqsubseteq S}^{}\textit{SRU}(s,T,S).\]

For instance, consider the sequence $s$ = $<$\{$a$\}, \{$ce$\}$>$, which has three instances at position $<$1, 2$>$, $<$1, 4$>$, and $<$3, 4$>$ in $S_{2}$. Then, we obtain \textit{SRU}$(s,T$, $<$1,2$>$, $S_{2})$ = \$15; both \textit{SRU}$(s,T$, $<$1, 4$>$, $S_{2})$ and \textit{SRU}$(s,T$, $<$3,4$>$, $S_{2})$ are equal to \$0 as \textit{ru}($s$, 4, $S_{2}$) = \$0. Moreover, $s$ has two instances at positions $<$1, 2$>$ and $<$1, 3$>$ in $S_{5}$. We have \textit{SRU}$(s,T$, $<$1, 2$>$, $S_{5})$ = \$13; further, \textit{SRU}$(s,T$, $<$1, 3$>$, $S_{5})$ = \$0 because 3 is not a promising extension position. Consider one more example where $s$ has two instances at positions $<$1, 2$>$ and $<$1, 3$>$ in $S_{4}$. We have \textit{SRU}$(s,T,$ $<$1, 2$>$, $S_{4})$ = \$12 and \textit{SRU}$(s,T$, $<$1, 3$>$, $S_{4})$ = \$7. Thus, the \textit{SRU} of $t$ in $S_{4}$ is \textit{SRU}($s, T, S_{4}$) = max\{\$12, \$7\} = \$12. Finally, the \textit{SRU} value of $s$ in $D_T$ can be calculated as follows: \textit{SRU}$(s,T)$ = \textit{SRU}$(s, T, S_{1})$ + \textit{SRU}$(s,T, S_{2})$ + \textit{SRU}$(s,T, S_{4})$ + \textit{SRU}$(s,T, S_{5})$ = \$12 + \$15 + \$12 + \$13 = \$52.

\begin{theorem}
	\label{SRU}
	Given a sequence $s'$, target sequence $T$, and $q$-sequence database $D$, if we suppose $s$ is the descendant of $s'$ that satisfies the condition $T \subset s$ in the \textit{LQS}-Tree, then the following can be obtained: $\textit{u}(s)\leq \textit{SRU}(s',T)$.
\end{theorem}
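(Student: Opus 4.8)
The plan is to reduce the global inequality to one $q$-sequence at a time and, on each $q$-sequence, to split the utility of a best instance of $s$ into a ``prefix'' contribution attributable to $s'$ and an ``extension'' contribution trapped inside the rest $q$-sequence of $s'$; these two contributions are exactly what the two terms of $\textit{SRU}$ at a promising extension position are designed to overestimate.

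First I would reduce to a single $q$-sequence. Since the $I$- and $S$-Extensions that generate the descendants of $s'$ in the \textit{LQS}-Tree only append items, $s'$ is a subsequence of $s$, so any $S$ with $s\sqsubseteq S$ also satisfies $T\sqsubseteq S$ (because $T\subset s$). As $\textit{SRU}(s',T,S)\ge 0$ for every $S$, we get $\textit{SRU}(s',T)=\sum_{S\in D,\,T\sqsubseteq S}\textit{SRU}(s',T,S)\ge\sum_{S\in D,\,s\sqsubseteq S}\textit{SRU}(s',T,S)$, while $u(s)=\sum_{S\in D,\,s\sqsubseteq S}u(s,S)$. Hence it suffices to prove $u(s,S)\le\textit{SRU}(s',T,S)$ for each fixed $S$ with $s\sqsubseteq S$.

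Fix such an $S$ and take an instance of $s$ in $S$ realizing $u(s,S)=u(s,p_s,S)$. Because $s$ is obtained from $s'$ by extensions, $s'$ is a prefix of $s$; restricting $p_s$ to the positions matching $s'$ gives an instance $p'$ of $s'$ in $S$, whose extension position I call $p$. By the alphabetical-order convention of the extensions, every $q$-item matched by the part of $p_s$ lying beyond $s'$ sits strictly after the extension item of $p'$, hence inside $S/_{(s',p)}$, and all these $q$-items are distinct; therefore $u(s,p_s,S)=u(s',p',S)+R$ with $R\le\textit{ru}(s',p,S)$. Moreover $R>0$: $s$ is a descendant of $s'$, so at least one extension was applied, the extension part of the instance is non-empty, and every $q$-item utility is strictly positive; thus $\textit{ru}(s',p,S)\ge R>0$.

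It remains to show that $p$ is a promising extension position, i.e.\ $\textit{Suf}(T,s')\sqsubseteq S/_{(s',p)}$. Granting this, $\textit{SRU}(s',T,p',S)=u(s',p',S)+\textit{ru}(s',p,S)\ge u(s',p',S)+R=u(s,p_s,S)$, so $\textit{SRU}(s',T,S)\ge u(s,S)$, and summing over the relevant $S$ yields $u(s)\le\textit{SRU}(s',T)$. The claim about $\textit{Suf}(T,s')$ is the only non-routine point: from $T\subset s$ one gets an embedding of $T$ into $s$, and the heart of the matter is that no itemset of $\textit{Suf}(T,s')$ can be matched, even partially, within the prefix $s'$ of $s$; otherwise the prefix of $T$ contained by $s'$ could be enlarged, contradicting the maximality in the definition of $\textit{Pre}(T,s')$. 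Hence $\textit{Suf}(T,s')$ embeds into the suffix of $s$ with respect to the prefix $s'$, and composing with $p_s$ places it inside $S/_{(s',p)}$. Making this precise requires a careful leftmost-embedding analysis that tracks how the alphabetically larger items introduced by the extensions interact with the partial-itemset clause of $\textit{Pre}$; this is the step I expect to be the main obstacle, while the rest is routine bookkeeping with the utility definitions.
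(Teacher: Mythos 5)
Your proposal is correct and follows essentially the same route as the paper's proof: both decompose the utility of a maximizing instance of $s$ into the utility of the induced instance of $s'$ plus a remainder bounded by $\textit{ru}(s',p,S)$, verify that the extension position is promising via $\textit{Suf}(T,s')\sqsubseteq S/_{(s',p)}$ (which the paper likewise asserts from $T\subseteq s$ without a detailed maximality argument), take the maximum over extension positions, and sum over the $q$-sequences. Your treatment is, if anything, slightly more careful about the per-sequence reduction and the strict positivity of the remainder.
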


\begin{proof}
	\label{proof_SRU}
	It is assumed that $s$ can be obtained by concatenating two sequences $s'$ and $s''$, denoted as $s$ = $s' \ast s''$, where $s'$ is a prefix sequence of $s$ and $s''$ is the other nonempty part. Note that the symbol $''\ast''$ represents the concatenation of two sequences. Let us consider a $q$-sequence $S \in \textit{D}_T$ $\land s \subseteq S$; then, the utility of $s$ in $S$ is the sum of two parts, which is specified as $u(s,S)$ = $u(s', p, S)$ + $u_p(s'')$. $u(s', p, S)$ presents the utility of the instance of $s'$ within the extension position $p$ in $S$, while $u_p(s'')$ is the utility of a particular instance of $s''$ in $S$, whose first item is after $p$. It is not difficult to determine that $u_p(s'')$ $<$ $ \textit{ru}(s', p, s)$. Thus, we obtain the following.
	\begin{align*}
		u(s,S) &= u(s',\ p,\ S) + u_p(s'') \\
	&\leq u(s',\ p,\ S) + \textit{ru}(s',\ p,\ S) \\
	&\leq \mathop{max}\limits_{\forall p \in EP(s',S)}{u(s',\ p,\ s) + \textit{ru}(s',\ p,\ s)}. 
	\end{align*}
	Moreover, for each extension position $p$, it can be obtained that $\textit{Suf}(T,s')$ $\subseteq s''$ as $T \subseteq s$; then, the conclusion $\textit{Suf}(T,s')$  $\sqsubseteq$  $S/_{(s',\ p)}$ can be drawn, as $s'' $ $\sqsubseteq $ $S/_{(s',\ p)}$. Obviously, $\textit{ru}(s', p, S) > 0$ can be established because the second part of $s$, i.e., $s''$, is not empty.
	\begin{align*}
	u(s,S) &\leq \mathop{max}\limits_{\forall p \in EP(s',S)}{u(s',\ p,\ s) + \textit{ru}(s',\ p,\ s)} \\
	&= \mathop{max}\limits_{\forall p \in EP(s',S)}{\textit{SRU}(s',T,p,S)} \\
	&= \textit{SRU}(s',\ T,\ S).
	\end{align*}
	Subsequently, the following can be obtained: $u(s)$ = $\sum_{\forall S \in D}^{}u(s,S) \leq \sum_{\forall S \in D}^{}\textit{SRU}(s',\ T,\ S)$ = $\textit{SRU}(s',T)$.
\end{proof}

\begin{strategy}[depth pruning strategy]
	Let $s$ be a candidate that corresponds to the node $N$ in the \textit{LQS}-tree. If $\textit{SRU}(s,\ T)$ $< \xi $ $\times u(D_T)$, then each descendant node of $N$ can be pruned; thus, the TUSQ algorithm can stop at node $N$.
\end{strategy} 

\begin{definition}[terminated descendants utility]
	Given a sequence $s$, target sequence $T$, and $q$-sequence $S$, we assume that the sequence $\alpha$ can generate $s$ by one $I$-Extension or $S$-Extension. Note that \textit{Pre}($T,s)$ is the longest prefix of $T$ contained by $s$, and \textit{Suf}($T,s)$ is the suffix of $T$ with respect to $\textit{Pre}(T,s)$. The \textit{TDU} of $s$ in $S$ with respect to $T$, denoted as \textit{TDU}($s,T,S)$, is defined as:
	\[\textit{TDU}(s, T, S) = SRU(\alpha, T, S)\]
	if and only if \textit{s} $\sqsubseteq$ \textit{S} $\land$ $\alpha \sqsubseteq$ \textit{S} $\land$ \textit{Suf}$(T,s)$ $\sqsubseteq$ \textit{S}$/_{(s, pivot)}$ can be established; otherwise, \textit{TDU}$(s, T, S)$ = 0. The \textit{TDU} of $s$ with respect to $T$ in a $q$-sequence database $D$, denoted as $\textit{TDU}(s, T, D)$, is defined as:
	\[\textit{TDU}(s, T, D) = \sum_{\forall S \in D \land t \sqsubseteq S}^{}\textit{TDU}(s, T, S).\]
	
\end{definition} 

For instance, given sequences $s$ = $<$\{$a$\}, \{$cd$\}$>$ and $\alpha$ = $<$\{$a$\}, \{$c$\}$>$. It is obvious that $S_{1}$ and $S_{2}$ contain both $s$ and $\alpha$. We obtain \textit{TDU}$(s, T, S_{1})$ = \$0 because the pivot value of $t$ in $S_{1}$ is 3, and $S_{1}/_{(t, 3)}$ is an empty sequence that does not contain \textit{Suf}($T,s)$ = $<$\{$e$\}, \{$c$\}$>$. In addition, we have \textit{TDU}$(s, T, S_{2})$ = \textit{SRU}$(\alpha,T,S_{2})$ = \textit{SRU}$(\alpha$, $T$, $<$1, 2$>$, $S_{2})$ = \$18. Finally, the \textit{TDU} value of $t$ in $D$ is \textit{TDU}$(s, T)$ = \$18.

\begin{theorem}
	\label{TDU}
	Given a sequence $s'$, target sequence $T$, and $q$-sequence database $D$, if $s$ is the descendant of $s'$ or identical to $s'$, which satisfies $T \subset s$ in the \textit{LQS}-Tree, it can be obtained that $\textit{u}(s)\leq \textit{TDU}(s',T)$.
\end{theorem}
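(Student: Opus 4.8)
The plan is to establish the inequality one $q$-sequence at a time and then add up, reusing the machinery of Theorem~\ref{SRU}. Let $\alpha$ denote the sequence that generates $s'$ by a single $I$- or $S$-extension, i.e. the parent of $s'$ in the \textit{LQS}-tree (with the natural convention of an empty $\alpha$ when $s'$ is a single-item sequence). Since $\alpha$ is $s'$ with its last item removed, $\alpha \subset s' \subseteq s$; hence $s$ is a strict descendant of $\alpha$ in the \textit{LQS}-tree and we may write $s = \alpha \ast \tilde{s}$ with $\tilde{s}$ nonempty. Together with the hypothesis $T \subset s$, this is exactly the configuration treated inside the proof of Theorem~\ref{SRU}, whose argument in fact yields the sharper, per-$q$-sequence bound: for every $S \in D_T$ with $s \sqsubseteq S$ we have $u(s,S) \le \textit{SRU}(\alpha, T, S)$ (apply that argument with $\alpha$ in the role of $s'$).

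Next I would verify that, for each such $S$, the three guards in the definition of \textit{TDU} hold, so that $\textit{TDU}(s', T, S) = \textit{SRU}(\alpha, T, S)$. The first two, $s' \sqsubseteq S$ and $\alpha \sqsubseteq S$, are immediate from $\alpha \subseteq s' \subseteq s \sqsubseteq S$ together with the fact that any subsequence of a sequence contained in $S$ is itself contained in $S$. The third, $\textit{Suf}(T,s') \sqsubseteq S/_{(s',\, \textit{pivot})}$, is the heart of the matter. Write $s = s' \ast s''$: when $s = s'$ the suffix $\textit{Suf}(T,s')$ is the empty sequence (because $T \subseteq s'$ forces $\textit{Pre}(T,s') = T$) and the claim is trivial; when $s'' \neq \emptyset$ one shows $\textit{Suf}(T,s') \subseteq s''$ by the same reasoning as the analogous step of the proof of Theorem~\ref{SRU}. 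Now fix an instance of $s$ in $S$; restricting it to the prefix $s'$ yields an instance of $s'$ in $S$ with some extension position $p$, and $s''$ --- hence $\textit{Suf}(T,s')$ --- is contained in the rest sequence $S/_{(s',\, p)}$. Since $p \in \textit{EP}(s',S)$, the pivot satisfies $\textit{pivot} \le p$, so $S/_{(s',\, p)}$ is a contiguous tail of $S/_{(s',\, \textit{pivot})}$, and containment in the former forces $\textit{Suf}(T,s') \sqsubseteq S/_{(s',\, \textit{pivot})}$.

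Finally I would collect the cases. If $S \notin D_T$ then $s \not\sqsubseteq S$ (since $T \subseteq s \sqsubseteq S$ would imply $T \sqsubseteq S$), so $u(s,S) = 0$; if $S \in D_T$ but $s \not\sqsubseteq S$ then again $u(s,S) = 0 \le \textit{TDU}(s',T,S)$; and if $S \in D_T$ with $s \sqsubseteq S$ the two steps above give $u(s,S) \le \textit{SRU}(\alpha,T,S) = \textit{TDU}(s',T,S)$. Since $u(s) = \sum_{S \in D_T} u(s,S)$ (all other $q$-sequences contribute $0$), summing over $S \in D_T$ gives $u(s) \le \sum_{S \in D_T} \textit{TDU}(s',T,S) = \textit{TDU}(s',T)$, which is the claim.

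The hard part will be the guard $\textit{Suf}(T,s') \sqsubseteq S/_{(s',\, \textit{pivot})}$: it rests on the somewhat delicate claim $\textit{Suf}(T,s') \subseteq s''$, which depends on the itemset-splitting conventions behind ``prefix'', ``suffix'', and ``longest prefix of $T$ contained by a sequence'', and on the monotonicity of the rest sequence in the extension position (the rest sequence at the pivot dominates those at all later extension positions). Everything else is routine bookkeeping that parallels the proof of Theorem~\ref{SRU}.
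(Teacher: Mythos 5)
Your proposal is correct and follows essentially the same route as the paper's proof: reduce to the \textit{SRU} bound for the parent $\alpha$ of $s'$, verify that the guards in the definition of \textit{TDU} hold so that $\textit{TDU}(s',T,S) = \textit{SRU}(\alpha,T,S)$, and sum over the $q$-sequences of $D_T$. You are in fact more careful than the published proof on the one delicate point --- passing from containment of $\textit{Suf}(T,s')$ in the rest sequence at an arbitrary extension position to containment in the rest sequence at the pivot, and handling the $s = s'$ case --- both of which the paper glosses over.
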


\begin{proof}
	\label{proof_TDU}
	Let us assume that $s'$ is the extension sequence of the sequence $\alpha$; thus, the node representing $\alpha$ is the parent node of that of $s'$ in \textit{LQS}-Tree. According to Proof \ref{proof_SRU}, given a $q$-sequence $S$ containing $s$, we obtain $u(s,S) \leq \textit{SRU}(\alpha, T, S)$ as $s' \subseteq$ $s \land T \subseteq s$. In addition, we can also obtain that \textit{Suf}($T,s') \sqsubseteq S/_{(s, p)}$ because $s'$ is a prefix of $s$ and $s \subseteq S$. In addition, $s' \subseteq S$ when $s \subseteq S$. Then, based on Definition \textit{SRU}, it can be observed that $u(s,S)$ $\leq$ $\textit{SRU}(\alpha, T, S)$ = $\textit{TDU}(s', T, S)$. Finally, it can be concluded that $u(s)$ = $\sum_{\forall S \in D}^{}u(s,S) \leq \sum_{\forall S \in D}^{}\textit{TDU}(t, T, S)$ = $\textit{TDU}(s',T)$.
\end{proof}

\begin{strategy}[width pruning strategy]
	Let $s$ be a candidate sequence corresponding to node $N$ in the \textit{LQS}-tree. If $\textit{TDU}(s,T)$ $< \xi $ $\times u(D_T)$, then the subtree rooted at $s$ can be pruned.
\end{strategy} 

\subsection{Proposed projected database}

This subsection introduces the developed data structure, named targeted chain, based on the concept of the utility chain \cite{wang2016efficiently}. As we know, a naive method to identify the utility of a sequence is to scan each $q$-sequence in the database. This method is inefficient as the number of candidates is too large and it is necessary to scan the entire database repeatedly under this method. To reduce the cost of scanning the database, several studies introduced compact database representations as well as the concept of a projected database to decrease the scale of databases when the algorithm explores longer candidates. The key idea of a projected database that is significantly smaller than the original database follows a simple principle; that is, a $q$-sequence $S$ does not contain a sequence $s$ if $S$ does not contain the prefix of $s$. Thus, the mining method scans the projected database of $s$, which is sufficient when calculating the utility of extension sequences of $s$. 

In the proposed algorithm, each $q$-sequence in the database is represented by a data structure called the $q$-matrix \cite{wang2016efficiently} for efficient access. The $q$-matrix of a $q$-sequence consists of a utility matrix and rest utility matrix, which record the utility of each item in the $q$-sequence and the utility of the corresponding rest sequence, respectively. To further illustrate this data structure, Figure \ref{qmatrix} shows an example of the $q$-matrix of $S_{1}$ as listed in Table \ref{table1}. More details about the $q$-matrix can be obtained in Ref. \cite{wang2016efficiently}.

\begin{figure}[htbp]
	\centering
	\includegraphics[width=0.8\linewidth]{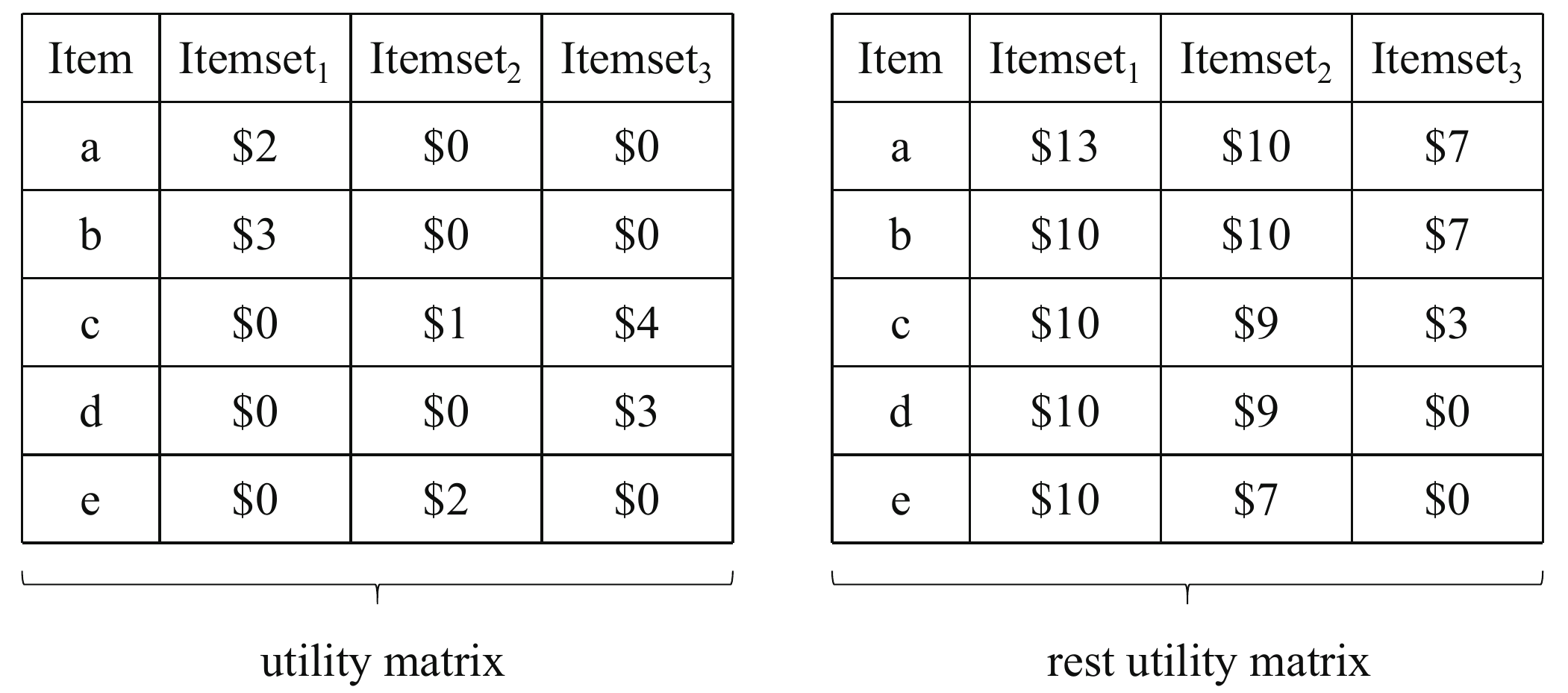}
	\caption{$q$-matrix of $S_{1}$ in Table \ref{table1}}
	\label{qmatrix}
\end{figure}

The developed targeted chain is a compact representation of the projected database consisting of essential information of $q$-sequences for utility and upper bound calculation. Figure \ref{targeted_chain} illustrates a targeted chain of the sequence $<$\{$a$\}, \{$c e$\}$>$ in the $q$-sequence database $D$, which is listed in Table \ref{table1}, as an example. As we can observe, a targeted chain consists of a head table and multiple targeted lists, where each row in the head table corresponds to a targeted list. To facilitate the description of the targeted chain, it is assumed that a sequence $s$ has multiple instances at $m$ promising extension positions $\textit{PEP}$ = $\{{pep}_{1}$, ${pep}_{2}$, $\cdots$, ${pep}_{m}\}$ in a $q$-sequence $S$; moreover, there are $n$ $q$-sequences, including $s$, where $s$ owns at least one promising extension position in the $q$-sequence database $D$. Then, the $n$ targeted lists corresponding to the $n$ $q$-sequences are arranged into the targeted chain. Moreover, the targeted list representing $S$ is a set of elements with a size of $m$, which corresponds to $m$ promising extension positions of $s$ in $S$. The $i$-th element contains the following fields: 1) \textit{TID}, which is the $i$-th promising extension position \({pep}_{i}\), 2) \textit{Utility}, which is the utility value of $s$ at the $i$-th promising extension position \({pep}_{i}\), and 3) \textit{RestUtility}, which is the rest utility of $s$ at the $i$-th promising extension position \({pep}_{i}\). The head table includes three key pieces of information: \textit{SID}, \textit{SRU}, and \textit{Prel}, which record the identifier of $S$, \textit{SRU} value of $s$ in $S$, and length of \textit{Pre}($T,t$), respectively.

\begin{figure}[htbp]
	\centering
	\includegraphics[width=0.82\linewidth]{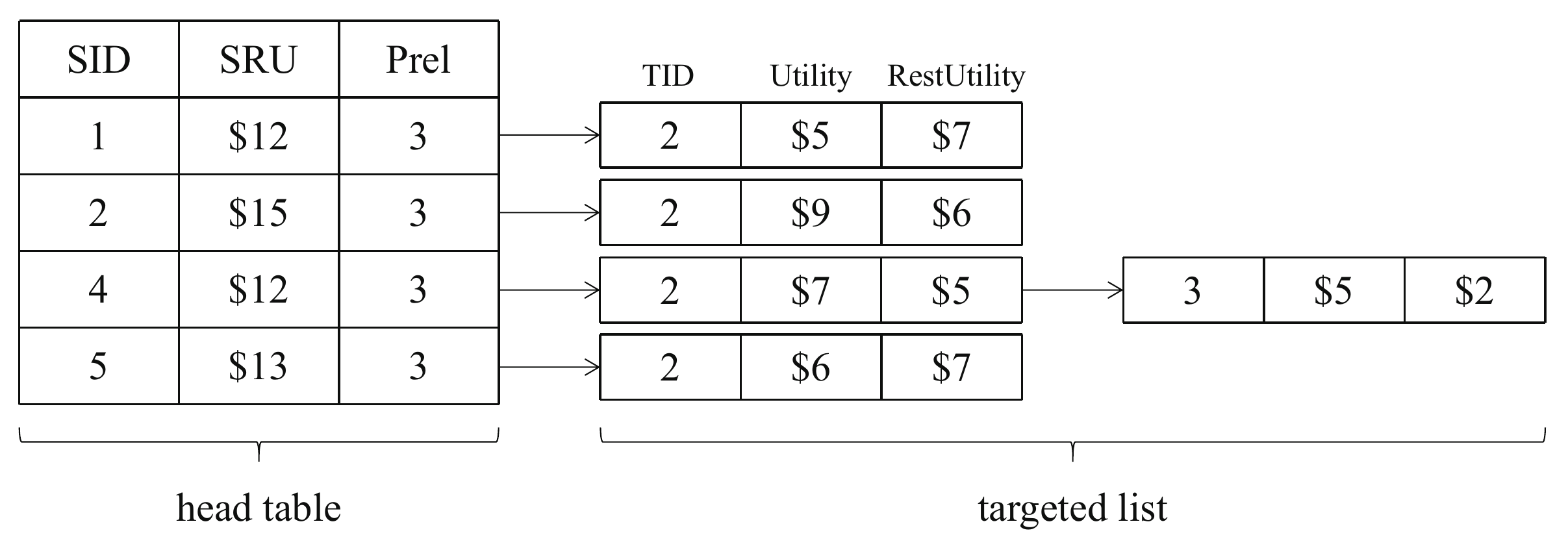}
	\caption{Targeted chain of sequence $<$\{$a$\},\{$ce$\}$>$}
	\label{targeted_chain}
\end{figure}

\subsection{Efficient method for calculating upper bounds}

In the developed upper bounds \textit{SRU} and \textit{TDU}, suffix information is utilized in the constraint conditions; that is, \textit{Suf(T,t)} $\sqsubseteq$ $S/_{(t,p)}$ in \textit{SRU}, and \textit{Suf(T,t)} $\sqsubseteq$ $S/_{(t,pivot)}$ in \textit{TDU}. It is time-consuming to determine whether one sequence is the subsequence of another as both sequences must be scanned from beginning to end, which results in low efficiency in the calculation of \textit{SRU} and \textit{TDU} values. To handle this problem, we developed a data structure called the \textit{LI}-Table to accelerate the checking process. 

\begin{definition}[last instance]
	Given a sequence $s$ = $<$$x_{1}$, $x_{2}$, $\cdots$, $x_{m}$$>$ and $q$-sequence $S$: $<$$X_{1}$, $X_{2}$, $\cdots$, $X_{n}$$>$, where $m \leq n$, we say that the last instance of $s$ in $S$ is located at position $p'$: $<$$k'_{1}$, $k'_{2}$, $\cdots$, $k'_{m}$$>$, if and only if $\forall p$: $<$$k_{1}$, $k_{2}$, $\cdots $, $k_{m}$$> \in$ $P(s,S)$ such that $k_{1} \leq k'_{1}$, $k_{2} \leq$ $k'_{2}$, $\cdots $, $k_{m}$ $\leq k'_{m}$.
\end{definition}

For example, $<$$\{a\}$,$ \{c\}$$>$ has the last instance in $S_{1}$ and $S_{2}$ at positions $p_{1}$: $<$1, 3$>$ and $p_{2}$: $<$3, 4$>$, respectively. Given a target sequence $T$ = $<$$z_{1}$, $z_{2}$, $\ldots$, $z_{n}$$>$ and $q$-sequence database $D$, the \textit{LI}-Table records the positions of the last instance of $T$ in each $q$-sequence of the filtered database $D_{T}$. The rows of the \textit{LI}-Table are indexed by the $q$-sequence identifier (i.e., $SID$), while the columns are indexed by the itemsets of $T$. Let there be a $q$-sequence $S$ = $<$$X_{1}$, $X_{2}$, $\ldots$, $X_{v}$$>$ with identifier $\textit{SID}_{S}$ in $D_{T}$, and let the position of the last instance of $T$ in $S$ is $p$: $<$$k'_{1}$, $k'_{2}$, $\ldots$, $k'_{n}$$>$. Then, the \textit{LI}-Table ($\textit{SID}_{S},i$) = $k'_{i} $ ($1\leq i\leq n, 1\leq k'_{i}\leq v)$. 

Let us consider the running example, $T$ = $<$\{\textit{a}\}, \{\textit{c} \textit{e}\}, \{\textit{c}\}$>$ has a last instance at position $<$1, 2, 4$>$ in $S_{2}$. Then, we obtain \textit{LI}(2,1) = 1, \textit{LI}(2,2) = 2, and \textit{LI}(2,3) = 4. The complete structure of the \textit{LI}-Table of $T$ in the running example $D$ is shown in Table \ref{LIT}.

\begin{table}[!htbp]
	\caption{Last instance table of $<$\{$a$\}, \{$ce$\}, \{$c$\}$>$ in the running example $D$}
	\label{LIT}
	\centering      
	\begin{tabular}{|c|c|c|c|}
		\hline
		\textbf{$SID$} & \textbf{$Itemset_{1}$} & \textbf{$Itemset_{2}$} & \textbf{$Itemset_{3}$} \\ \hline 
		$S_{1}$ & 1 & 2 & 3 \\ \hline
		$S_{2}$ & 1 & 2 & 4 \\ \hline
		$S_{4}$ & 1 & 3 & 4 \\ \hline
		$S_{5}$ & 1 & 2 & 3 \\ \hline
	\end{tabular}
\end{table}

Let us consider a sequence $s$ = $<$$x_{1}$, $x_{2}$, $\cdots$, $x_{m}$$>$. Suppose it has an instance, denoted as $t$, at position $<$$p_{1}$, $p_{2}$, $\cdots$, $p_{m}$$>$ in a $q$-sequence $S$. According to the aforementioned definitions, $S/_{(s,p_{m})}$ denotes the rest sequence of $S$ with respect to $p_{m}$. Moreover, $Suf(T,t)$ is the suffix of $T$ with respect to the longest prefix of $T$, that is, $Pre(T,t)$, contained by $s$. Let us assume that the first itemset in $\textit{Suf}(T,s)$, denoted as $Suf_{1}$, belongs to the $i$-th itemset of $T$. The value of $i$ can be obtained easily because the length of $Pre(T,s)$ is stored in the field \textit{Prel} of the targeted chain of $s$. We can determine whether $p_{m}$ is a promising extension position, that is, whether $Suf(T,t)$ $\sqsubseteq $ $S/_{(s,p_{m})}$, by the following deduction. If \textit{LI}$(SID_{S},i) $ $< p_{m}$, $\textit{Suf}(T,\ s)\sqsubseteq \textit{S}/_{(s,p_{m})}$ is not satisfied. Otherwise, $\textit{Suf}(T,\ s)\sqsubseteq \textit{S}/_{(s,p_{m})}$ can be established.

\begin{proof}
	\label{proof_LIT}
	Because $p_{m}$ is the extension position of $t$ in $S$, the position of the first itemset of $S/_{(t,p_{m})}$ in $S$ is $p_{m}$ or $p_{m}$+1. According to the definition of the \textit{LI}-Table, $Suf(T,t)$ has a last instance in $S$ at position $<$$\textit{LI}(SID_{S},i)$, $\textit{LI}(SID_{S}$, $i+1)$, $\cdots$, $\textit{LI}(SID_{S},n)$$>$. Because \textit{LI}$(SID_{S},i)<p_{m}$, $S/_{(t,p_{m})}$ does not contain $Suf_{1}$, \textit{Suf}$(T, t)$ does not have an instance in $S/_{(t,p_{m})}$. Thus, \textit{Suf}$(T, t)\sqsubseteq \textit{S}/_{(t,p_{m})}$ cannot be established.
\end{proof}

By introducing the store space-saving structure of the \textit{LI}-Table, the time-consuming subsequence-judging problem can be simplified as a simple numeric comparison, which significantly improves the efficiency of the calculation of \textit{SRU} and \textit{TDU} values.

\subsection{Proposed targeted querying algorithm}

Based on the above data structures and pruning strategies, the proposed TUSQ algorithm is described as follows. Algorithm \ref{alg:TUSQ} shows the main procedure of TUSQ, which takes a $q$-sequence database, an external utility table, a target sequence, and a minimum utility threshold as the inputs. After the mining process, TUSQ outputs the complete set of high-utility targeted queries. Initially, the utility values of $q$-items are calculated using the external utilities in \textit{UT}, which is an external utilities table, and internal utilities in $D$. At the beginning, TUSQ follows the \textit{DPP} strategy for filtering the redundant $q$-sequences when scanning $D$. At the same time, the utility value and \textit{SRU} upper bound of each 1-sequence existing in $D$ can be obtained. Note that several data structures, such as $q$-matrices, projected databases, and the \textit{LI}-Table, are constructed simultaneously in the database scan (Line 1). Then, TUSQ handles the 1-sequences. First, TUSQ identifies each 1-sequence according to Definition \ref{utility-driven targeted query} with the help of the targeted chain in its projected database (Lines 3--5). Then, TUSQ utilizes the depth pruning strategy to check each 1-sequence $s$. If the \textit{SRU} upper bound value of $s$ is not smaller than the threshold, it recursively calls the second procedure \textit{PatternGrowth} to extract longer UTQs with the prefix $s$ (Lines 6--8) for traversing the \textit{LQS}-tree. Finally, it can return all the desired queries (Line 10).

\begin{algorithm}[htbp]
	\caption{TUSQ Algorithm}
	\label{alg:TUSQ}
	\begin{algorithmic}[1]
		\REQUIRE 
		$D$: a $q$-sequence database;	\textit{UT}: external utility table containing the external utility of each item; $T$: a target sequence;
		$\xi$: a minimum utility threshold.
		\ENSURE 
		\textit{UTQ}: the complete set of utility-driven targeted queries;
		
		\STATE scan $D$ to: \\
		i) filter out redundant sequences and construct the filtered database $D_T$; $//$ The \textit{DPP} strategy\\
		ii) calculate (1) the utility value and \textit{SRU} of each 1-sequence in $D_T$, (2) the utility of $D_T$, i.e., $u(D_T)$;\\
		iii) construct (1) the $q$-matrix of each $q$-sequence in $D_T$, (2) projected databases of all 1-sequences, (3) \textit{LI-Table} based on $T$ and $D_T$;
		
		\FOR {each $s \in $ 1-sequences }
		\IF{$u(s) \ge \xi\times u(D_{T})\land$ $s.\textit{Prel}$ = $T.\textit{length}$}
		\STATE update $\textit{UTQ} \leftarrow \textit{UTQ} \cup s$;
		\ENDIF \\
		$//$ Depth pruning strategy
		\IF{$\textit{SRU}(s,T) \geq \xi\times u(D_T)$}
		\STATE call \textit{PatternGrowth($s$, $s$.\textit{ProjectedDatabase})};
		\ENDIF
		\ENDFOR
		
		\RETURN \textit{UTQ};
	\end{algorithmic}	
\end{algorithm}

Algorithm \ref{alg:PG} presents the details of the recursively searching procedure. It takes a sequence $s$ to be the prefix and the corresponding projected database $s.\textit{ProjectedDatabase}$ as input. As the first step, it obtains the $q$-matrix \textit{qm} corresponding to each targeted list in $s.\textit{ProjectedDatabase}$. By searching in \textit{qm}, it adds the items that can be used for $I$-Extension and $S$-Extension of $s$ into \textit{ilist} and \textit{slist}, respectively (Lines 1--5). Then, it handles each item $i$ in \textit{ilist} (Lines 6--14). The collected item $i$ is extended to $s$ to generate the extension sequence $s'$ (Line 7). Then, the procedure checks the candidate $s'$ using the \textit{TDU} value with respect to the target $T$ based on the width pruning strategy (Lines 8--10). Note that the \textit{TDU} value of $s'$ is calculated when searching \textit{qm} (Line 3). If $s'$ is not pruned, then the projected database of $s'$ is built based on $s.\textit{ProjectedDatabase}$ and $s'$ is placed in the set \textit{PromisingSeq} (Lines 21--22). The other set of items for the S-Extension \textit{slist} can be handled similarly (Lines 15--23). The newly generated sequences are first evaluated to determine whether they are UTQs (Lines 25--27). Following the depth pruning strategy, the procedure invokes itself recursively with $s'$ for examining the new and longer sequence to go deeper in the \textit{LQS}-Tree (Lines 28--30).

In addition, considering that there is no previous algorithm for the task of targeted utility-oriented sequence querying, a baseline method named HUS-UTQ, which can also mine UTQs from a $q$-sequence database, is designed based on the HUS-Span algorithm \cite{wang2016efficiently}. The structure of HUS-UTQ is described briefly as follows. First, the \textit{DPP} strategy is performed on the input $q$-sequence database $D$ to obtain the filtered database $D_{T}$. Then, HUS-Span is utilized to mine the complete set of HUSPs from $D_{T}$. Finally, the UTQs are derived by selecting the patterns containing the target sequence from the HUSPs. When compared to the well-designed TUSQ, HUS-UTQ is simpler but rougher. The primary difference between these two algorithms lies in the upper bounds they use. Moreover, TUSQ adopts the efficient \textit{LI}-Table to accelerate the calculation of upper bounds. 

\begin{algorithm}[htbp]
	\caption{PatternGrowth}
	\label{alg:PG}
	\begin{algorithmic}[1]
		\REQUIRE 
		$s$: a sequence as the prefix; \\
		$s$.\textit{ProjectedDatabase}: the projected database of the sequence $s$.
		
		\FOR {each targeted list $\textit{tl} \in s$.\textit{ProjectedDatabase}}
		\STATE get the $q$-matrix \textit{qm} corresponding to \textit{tl}
		\STATE search \textit{qm} to put $I$-Extension items into \textit{ilist};
		\STATE search \textit{qm} to put $S$-Extension items into \textit{slist};
		\ENDFOR
		
		\FOR {each item $i \in \textit{ilist}$ }
		\STATE $s'$ $\leftarrow$ $<$ $s \oplus i$$>$; \\
		$//$ Width pruning strategy
		\IF{$\textit{TDU}(s',T)$ $< \xi \times$ $u(D_T)$}
		\STATE remove $i$ from \textit{ilist};
		\STATE continue;
		\ENDIF
		\STATE build projected database of $s'$ $\textit{ProjectedDatabase}_{s'}$; 
		\STATE put $s'$ into \textit{PromisingSeq};
		\ENDFOR
		
		\FOR {each item $i \in \textit{slist}$ }
		\STATE $s'$ $\leftarrow$ $<$ $s \otimes i$$>$; \\
		$//$ Width pruning strategy
		\IF{$\textit{TDU}(s',T)$ $< \xi \times$ $u(D_T)$}
		\STATE remove $i$ from \textit{slist};
		\STATE continue;
		\ENDIF
		\STATE construct $s'$.\textit{ProjectedDatabase}; 
		\STATE put $s'$ into \textit{PromisingSeq};
		\ENDFOR
		
		\FOR {each sequence $\textit{s'} \in$   \textit{PromisingSeq}}
		\IF{$u(s') \geq u(D_T) \times \xi \land s'.\textit{Prel}$ = $T.\textit{length}$}
		\STATE update $\textit{UTQ} \leftarrow \textit{UTQ} \cup s'$;
		\ENDIF \\
		$//$ Depth pruning strategy
		\IF{$\textit{SRU}(s',T)\geq \xi\times u(D_T)$}
		\STATE call \textit{PatternGrowth($s'$, $s'$.\textit{ProjectedDatabase})};
		\ENDIF
		\ENDFOR
	\end{algorithmic}	
\end{algorithm}

\section{Experiments}   \label{sec:experiments}

To measure the effectiveness of the developed TUSQ algorithm, substantial experiments were conducted. All experiments were performed on a personal workstation equipped with 3.40 GHz Intel Core i7-10700K Processor with 32.0 GB of main memory to run the Windows 10 64-bit operating system. All algorithms executed in the experiments were implemented in Java using the IntelliJ IDEA compiler in Community Edition.

\begin{table}[!ht]
	\caption{Features of the datasets}
	\label{features}
	\centering      
	\begin{tabular}{|c|c|c|c|c|c|}
		\hline
		\textbf{Dataset} & \textbf{$|\textit{D}|$} & \textbf{$|\textit{I}|$} & \textbf{$\textit{avg}(\textit{S})$} & \textbf{$\textit{max}(\textit{S})$} &  \textbf{$\textit{avg}(\textit{Its})$} \\ \hline 
		\textit{MSNBC} & 31,790 & 17 & 13.33 & 100 & 1.00 \\ \hline     
		\textit{Yoochoose} & 234,300 & 16,004 & 2.25 & 112 & 1.97 \\ \hline
		\textit{Syn40K} & 40000 & 7584 & 6.19 & 18 & 4.32 \\ \hline
		\textit{Bible} & 36,369 & 13,905 & 21.64 & 100 & 1.00 \\ \hline
		\textit{Leviathan} & 5,834 & 9,025 & 33.81 & 100 & 1.00 \\ \hline
		\textit{Sign} & 730 & 267 & 52.00 & 94 & 1.00 \\ \hline
	\end{tabular}
\end{table}

\subsection{Data description}

In the experiments, we used five real-world datasets and a synthetic dataset to evaluate the performance of the algorithms. Among these datasets, \textit{MSNBC} and \textit{Yoochoose} are composed of web clickstream data from msnbc.com and an e-commerce website, respectively. \textit{Syn40K} is a synthetic dataset generated by IBM data generator \cite{agrawal1994fast}. Moreover, \textit{Bible} and \textit{Leviathan} are conversions of the Bible and the novel Leviathan, respectively, where each word is converted into an item and each sentence is regarded as a sequence. Further, \textit{Sign} is a sign language utterance dataset. Note that \textit{Yoochoose} can be obtained from the website of ACM Recomender Systems\footnote{https://recsys.acm.org/recsys15/challenge/}, and the other five datasets are available at SPMF, an open-source data mining library\footnote{http://www.philippe-fournier-viger.com/spmf/}. The above six datasets have different characteristics, which can represent a majority of data types obtained in real-life situations. The features of the six datasets are listed in Table \ref{features}. Note that $|D|$ is the number of $q$-sequences, $|I|$ is the number of distinct $q$-items, $\textit{avg}(S)$ and $\textit{max}(S)$ are the average and maximum length of $q$-sequences, respectively, $\textit{avg}(Its)$ is the average number of $q$-items per $q$-itemsets.

\begin{figure*}[htbp]
	\centering
	\includegraphics[height=0.3\textheight,width=0.95\linewidth]{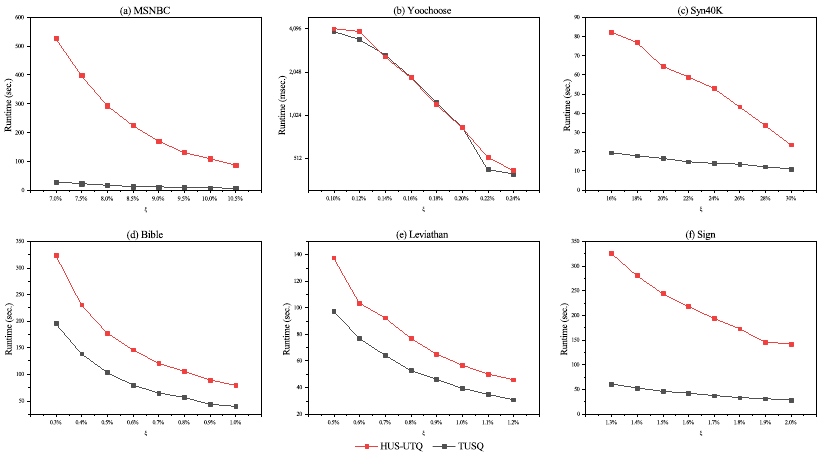}
	\caption{Runtime of the compared methods under various minimum utility thresholds}
	\label{runtime}
\end{figure*}

\begin{figure*}[htbp]
	\centering
	\includegraphics[height=0.3\textheight,width=0.95\linewidth]{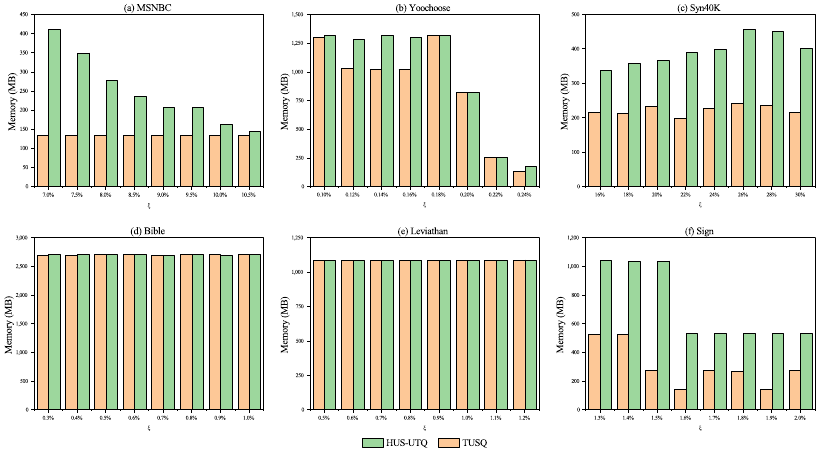}
	\caption{Memory of the compared methods under various minimum utility thresholds}
	\label{memory}
\end{figure*}

\begin{figure*}[htbp]
	\centering
	\includegraphics[height=0.33\textheight,width=0.95\linewidth]{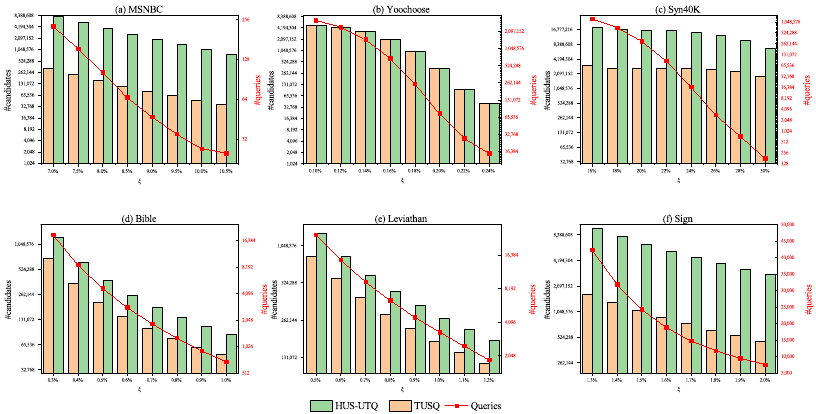}
	\caption{Candidates of the compared methods under various minimum utility thresholds}
	\label{candidates}
\end{figure*}

\subsection{Efficiency}

This subsection presents a performance comparison of the datasets mentioned above in terms of runtime and memory consumption. While considering the execution time of the proposed TUSQ method, Figure \ref{runtime} shows that the TUSQ outperformed the baseline HUS-UTQ in all cases. However, the advantages of TUSQ are not obvious in the \textit{Yoochoose} dataset. We speculate that this is because the $\textit{avg}(S)$ of \textit{Yoochoose} dataset is small, both algrithms can calculate the utility and upper bound values of candidates very quickly. Moreover, although the difference between the running time of the two algorithms is relatively large in the other five datasets, smaller improvement can be observed with the increase in the minimum utility threshold. In theory, the search space becomes increasingly large scale when the threshold is decreased. This is because the developed pruning strategies cannot prune certain branches of the \textit{LQS}-Tree that correspond to a small threshold, which implies that the method must traverse an enormous scope. This satisfactorily explains the decrease in runtime with increasing threshold. 

In addition to the runtime, another key measurement criterion of performance in the domain of data mining is the memory usage, which is compared and illustrated in Figure \ref{memory}. It is clear that the TUSQ occupies less memory than the baseline approach with all the parameter settings on the \textit{MSNBC}, \textit{Syn40K}, and \textit{Sign} datasets. However, the memory consumed by TUSQ was almost equal to that of HUS-UTQ in the \textit{Bible} and \textit{Leviathan} datasets, and the memory usage was stable under various threshold settings. In \textit{Yoochoose}, it can be observed that the TUSQ outperformed HUS-UTQ in certain cases, such as when the threshold was set to 0.16\%; however, the two methods sometimes consumed nearly the same memory for storing built structures. 

In summary, in most cases on the six datasets, the TUSQ algorithm significantly outperformed the HUS-UTQ algorithm in terms of runtime and memory usage. This is owing to the use of the compact data structure target chain and two pruning strategies in TUSQ, which significantly reduced the time and space complexity.

\subsection{Number of candidates and queries}

This subsection presents the number of candidates that intuitively demonstrates the size of the actual search space, generated by two compared algorithms. Moreover, the number of desired targeted queries is also shown in Figure \ref{candidates}. Note that $\textit{\#candidates}$ and $\textit{\#queries}$ represent the number of candidates and the number of UTQs, respectively. As can be observed in Figure \ref{candidates}, the set of candidates to be identified in the mining process of TUSQ was always significantly more than that of the baseline under various minimum utility thresholds. This is because two novel upper bounds that tightly overestimate utility values are adopted in TUSQ; consequently, the corresponding pruning strategies are able to eliminate unpromising candidates in an early stage. It is not difficult to understand the fact that the number of candidates decreases with an increase in the minimum utility threshold because the pruning strategies terminate enumerating those candidates from a short length when the threshold is large. In addition, the number of desired UTQs also follows the equable descent but is significantly smaller than the number of candidates in the same parameter setting. These phenomena reflect the fact that a large number of candidates are generated in the mining process; however, only a few of them are the final desired targeted queries. In conclusion, the designed TUSQ algorithm can easily check the data as less candidates as possible and retrieve the UTQs efficiently.

\begin{figure*}[!htb]
	\centering
	\includegraphics[height=0.56\textheight,width=0.95\linewidth]{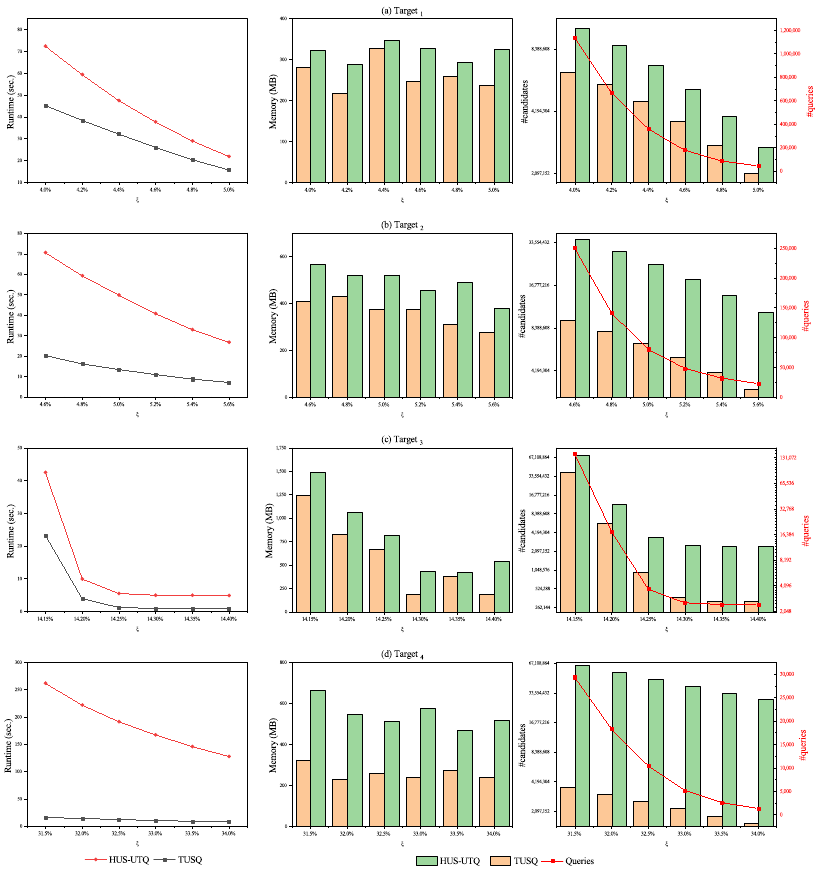}
	\caption{Performance of the compared methods under four target sequences and various minimum utility thresholds}
	\label{casestudy}
\end{figure*}

\subsection{Target sequence analysis}

In this subsection, we present the results of a series of experiments that were conducted on the \textit{Syn80K} dataset to evaluate the performance of the proposed method with various types of target sequences. Let there be four target sequences $\rm Target_1$: $<$$\{\textit{8636}\}$$>$, $\rm Target_2$: $<$\{\textit{2512}\ \textit{3180}\ \textit{5894}\}$>$, $\rm Target_3$: $<$\{\textit{2058}\}, \{\textit{7544}\}, \{\textit{8504}\}$>$, and $\rm Target_4$: $<$\{\textit{8856}\}, \{\textit{2058}\ \textit{3118}\ \textit{8389}\}, \{\textit{2276}\ \textit{5116}\}, \{\textit{2816}\}$>$, which represent a single-item-based sequence, an itemset-based sequence, a multiple-item-based sequence, and a hybrid sequence, respectively. The experimental results are shown in Figure \ref{casestudy}. As it can be clearly observed, Figure \ref{casestudy} not only demonstrates that the TUSQ has a better performance in terms of runtime, memory usage, and candidate filtering, but it is also skilled in addressing the database with respect to a complex target sequence. For example, the memory usage consumed by HUS-UTQ was more than that of TUSQ in all cases. The same phenomenon can be observed in the figures illustrating the performance in terms of runtime and candidate filtering. Moreover, it is interesting to observe that with the increase in the minimum utility threshold, the difference between the two compared algorithms becomes increasingly wider, especially with itemset-based sequences and hybrid sequences. This illustrates that our proposed algorithm is good at addressing the target sequences with a large $|I|$ value under a large minimum utility threshold.

\subsection{Evaluation of the proposed pruning strategies}

The performance of the methods without the designed pruning strategies was researched to evaluate the effect of the pruning strategies. Under the same threshold settings, a series of experiments were conducted, whose results are shown in Figure \ref{effectiveness}.

\begin{figure}[!htb]
	\centering
	\includegraphics[width=0.9\linewidth]{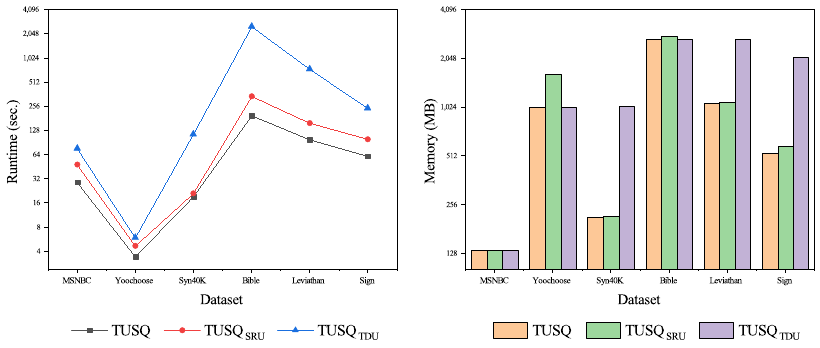}
	\caption{Effectiveness of the proposed pruning strategies}
	\label{effectiveness}
\end{figure}

 Note that TUSQ$_{\rm SRU}$ and TUSQ$_{\rm TDU}$ are the methods ablating the \textit{SRU} and \textit{TDU} upper bounds from the proposed TUSQ algorithm, respectively. To control the variate in the experiments, we set the minimum utility threshold to 7.0, 0.12, 16.0, 0.3, 0.5, and 1.3\% when executing the three algorithms on the six datasets. It can be observed that TUSQ outperforms the two variants, especially in terms of running time. In particular, the two variants require more runtime than TUSQ in all six cases. For example, TUSQ performs the mining process on the \textit{Bible} dataset in 195 s; however, the two variants require 340 and 2516 s, respectively. While considering the memory consumption, the performance of TUSQ is superior; however, the difference in performance is less obvious. For instance, in the \textit{MSNBC} dataset, the three approaches require almost the same size of memory for storing the projected database. Another interesting fact is that the variant TUSQ$_{\rm TDU}$ requires more time to discover the desired queries than TUSQ$_{\rm SRU}$ in all cases, which implies that \textit{TDU} is more powerful for pruning the search space than \textit{SRU}. In conclusion, the results shown in Figure \ref{effectiveness} present the positive effect of the proposed pruning strategies in TUSQ for mining the complete set of UTQs efficiently.

\subsection{Effectiveness of the mining results}

The primary contribution of this study is the proposed target-oriented utility sequence querying task. Finally, we conducted a series of experiments utilizing the conventional HUSPM method HUS-Span to compare the target-oriented utility sequence querying with HUSPM on the four datasets with various parameter settings. As it can be observed in Figure \ref{patterns}, for each dataset, the number of targeted queries discovered by TUSQ was significantly less than that of high-utility queries extracted by HUS-Span with the same minimum utility threshold. For example, in the \textit{MSNBC} dataset, the number of queries was reduced by a thousand times when adding the constraint of the target sequence. In the less ideal case, for example, in the \textit{Leviathan} dataset, under the same threshold, HUS-Span could mine more queries in one time than the targeted queries of TUSQ. In conclusion, the proposed TUSQ algorithm is a practical and impeccable technology to interact with users by information selection, that is, querying with their targets. This ability significantly reduces the number of returned queries, which is beneficial for filtering redundant and useless information for users.

\begin{figure}[!htb]
	\centering
	\includegraphics[width=0.9\linewidth]{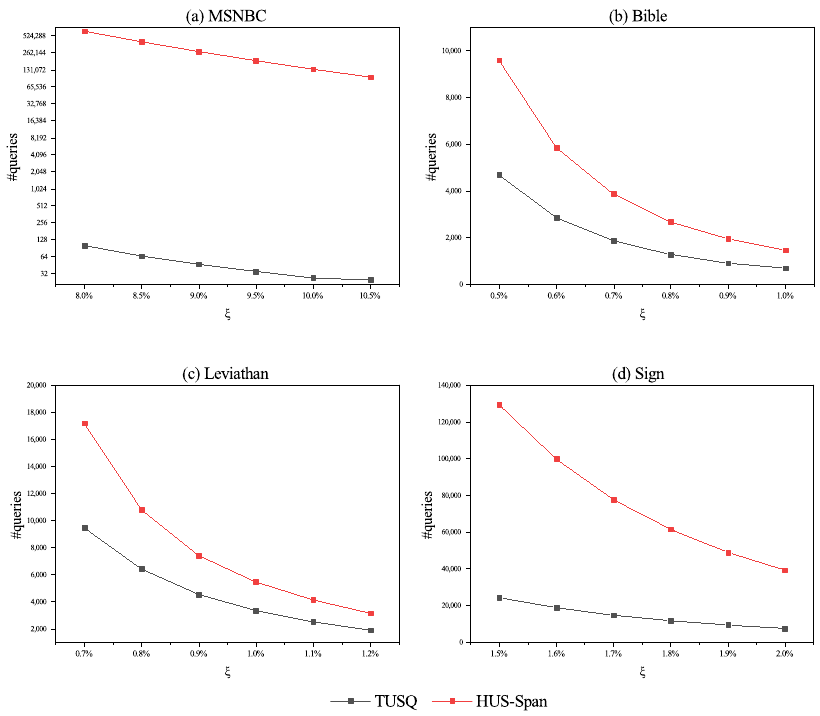}
	\caption{Number of queries discovered by TUSQ and HUS-Span}
	\label{patterns}
\end{figure}

\section{Conclusion}   \label{sec:conclusion}

FIM, SPM, ARM, HUIM, and HUSPM were emerging research topics in the domain of KDD. These data querying technologies aim to discover desired patterns, as ordinary queries, from the database with different measurements. In recent years, DBMS has been widely applied for managing an enormous collection of multisource, heterogeneous, complex, or growing data. A practical and impeccable DBMS can interact with users by information selection, that is, querying with their targets. However, the aforementioned querying algorithms cannot handle this problem. In this study, we incorporated utility into target-oriented SPQ and formulated the task of targeted utility-oriented sequence querying. To address the proposed problem, we developed a novel algorithm, namely TUSQ, which relies on a projection technology utilizing a compact data structure called the targeted chain. For further efficiency, two novel upper bounds \textit{SRU} and \textit{TDU}, as well as two pruning strategies were designed. Moreover, a structure named LI-Table was introduced to facilitate the calculation of upper bounds. An extensive experimental study conducted on several real and synthetic datasets shows that the proposed algorithm outperformed the designed baseline algorithm in terms of runtime, memory consumption, and candidate filtering. When compared to the HUSPs with lots of redundancy, as discovered using the HUSPM methods, the targeted queries that were returned by TUSQ are more interesting and useful for users.

%%%%%%%%%%%%%%%%%%%%%%%%%%%%%%%%%%%%%%%

% if have a single appendix:
%\appendix[Proof of the Zonklar Equations]
% or
%\appendix  % for no appendix heading
% do not use \section anymore after \appendix, only \section*
% is possibly needed

% use appendices with more than one appendix
% then use \section to start each appendix
% you must declare a \section before using any
% \subsection or using \label (\appendices by itself
% starts a section numbered zero.)
%

% use section* for acknowledgment
\ifCLASSOPTIONcompsoc
  % The Computer Society usually uses the plural form
  \section*{Acknowledgments}
\else
  % regular IEEE prefers the singular form
  \section*{Acknowledgment}
\fi

This work was supported by National Natural Science Foundation of China (Grant No. 62002136), Natural Science Foundation of Guangdong Province, China (Grant NO. 2020A1515010970) and Shenzhen Research Council (Grant NO. JCYJ20200109113427092, GJHZ20180928155209705).

% Can use something like this to put references on a page
% by themselves when using endfloat and the captionsoff option.
\ifCLASSOPTIONcaptionsoff
  \newpage
\fi

\bibliographystyle{IEEEtran}
% argument is your BibTeX string definitions and bibliography database(s)
\bibliography{TUSQ.bib}

\end{document}